\newtheorem{theorem}{Theorem}[section]
\newtheorem{proposition}[theorem]{Proposition}
\newtheorem{lemma}[theorem]{Lemma} 
\newtheorem{definition}[theorem]{Definition}
\newtheorem{rema}[theorem]{Remark} 
\newtheorem{example}[theorem]{Example} 
\theoremstyle{plain}
\theoremstyle{definition}
\def\mvsquareforeod{\hbox{$\lhd$}}
\def\mveod{\ifmmode\mvsquareforqed\else{\unskip\nobreak\hfil
\penalty50\hskip1em\null\nobreak\hfil\mvsquareforeod
\parfillskip=0pt\finalhyphendemerits=0\endgraf}\fi}
\def\mvsquareforqed{\hbox{\textsc{qed}}}
\def\mvqed{\ifmmode\mvsquareforqed\else{\unskip\nobreak\hfil
\penalty50\hskip1em\null\nobreak\hfil\mvsquareforqed
\parfillskip=0pt\finalhyphendemerits=0\endgraf}\fi}
\newenvironment{tbs}{%
   \begin{itemize}\tt }{\end{itemize}}
\newcommand{\btbs}{\begin{tbs}}                                                                      
\newcommand{\etbs}{\end{tbs}}
\newcommand{\bbB}{\mathbb{B}}
\newcommand{\bbM}{\mathbb{M}}
\newcommand{\bbN}{\mathbb{N}}
\newcommand{\bbX}{\mathbb{X}}
\newcommand{\id}{\mathit{id}}
\renewcommand{\phi}{\varphi} % nicer \phi
\newcommand{\FV}{\mathit{FVar}}
\newcommand{\BV}{\mathit{BVar}}
\newcommand{\G}{\mathcal{G}}
\newcommand{\eloi}{\exists}
\newcommand{\abel}{\forall}
\newcommand{\Win}{\mathit{Win}}
\newcommand{\M}{\mathbb{M}}
\newcommand{\sse}{\subseteq}
\newcommand{\Inf}{\mathit{Inf}}
\newcommand{\Pow}{\mathcal{P}}
\newcommand{\game}{\mathcal{G}}
\newcommand{\abe}{\forall}
\newcommand{\Ord}{\textsc{Ord}}
\newcommand{\Clp}{\mathrm{Clp}}
\newcommand{\Cl}{\mathrm{Cl}}
\newcommand{\Int}{\mathrm{Int}}
\newcommand{\Prop}{\mathrm{Prop}}
\newcommand{\Lang}{\mathcal{L}}
\newcommand{\lsem}{[\![}
\newcommand{\rsem}{]\!]}
\newcommand{\lfp}{\mathrm{lfp}}
\newcommand{\gfp}{\mathrm{gfp}}
\newcommand{\operator}[3]{{#3}^{#2}_{#1}}
\newcommand{\Ev}{\mathcal{E}}
\newcommand{\unf}{\mathrm{unfold}}
\newcommand{\cthanks}{\thanks{Supported by EPSRC grant EP/N015843/1}}
\title{Games for Topological Fixpoint Logic}
\author{Nick Bezhanishvili%\nthanks
%\institute{Delft University of Technology\\Delft, The Netherlands}
\institute{%Engineering Systems and Services\\
ILLC, University of Amsterdam\\
Amsterdam, The Netherlands}
\email{n.bezhanishvili@uva.nl}
\and
Clemens Kupke \cthanks
%\institute{University of Strathclyde\\Glasgow, United Kingdom}
\institute{%Mathematically Structured Programming Group\\
University of Strathclyde\\
Glasgow, United Kingdom}
\email{clemens.kupke@strath.ac.uk}
}
\begin{document}
 \maketitle
%\begin{frontmatter}
%   \thanks[myemail]{Email:
%    \href{mailto:N.Bezhanishvili@uva.nl} {\texttt{\normalshape
%        N.Bezhanishvili@uva.nl}}} \thanks[coemail]{Email:
%    \href{mailto:clemens.kupke@strath.ac.uk} {\texttt{\normalshape
%        clemens.kupke@strath.ac.uk}}}
\begin{abstract} 

Topological fixpoint logics are a family of logics that admits topological models and where  
the fixpoint operators are defined with respect to the topological interpretations. Here we consider
a topological fixpoint logic for relational structures based on Stone spaces,  
where the fixpoint operators are interpreted via clopen sets. 
We develop a game-theoretic semantics for this logic. First we introduce games characterising clopen  fixpoints of monotone operators on Stone spaces. These fixpoint games allow us to characterise the semantics for our topological fixpoint logic using a two-player graph game.
Adequacy of this game is the main result of our paper. Finally, we define bisimulations for the topological structures under consideration and 
use our game semantics to prove that the truth of a formula of our topological fixpoint logic is bisimulation-invariant.

  %We discuss game-theoretic characterisations of \dots and we use
  %these characterisations in order to obtain games for the topological $\mu$-calculus.
\end{abstract}
%\begin{keyword}
%	Modal mu-calculus, topology, clopen sets, regular open sets,  parity games.
%\end{keyword}
%\end{frontmatter}
\section{Introduction}\label{intro}

By \emph{topological fixpoint logics} we mean a family of fixpoint logics that admit topological models and where the fixpoint operator is defined with respect to topological 
interpretations.
In the standard semantics  fixpoint operators are interpreted as the least (or greatest) fixpoint of a monotone map in the powerset lattice. In our topological setting we interpret fixpoint operators as the least (or greatest) fixpoint of a monotone map on some (topological) sublattice of the powerset lattice
(e.g., clopen subsets, open or closed subsets, regular open or closed subsets etc.). An important motivation for studying such formalisms is that every axiomatic system
of the modal $\mu$-calculus is complete with respect to the topological semantics via clopen sets  \cite{AKM95}. Moreover, the powerful Sahlqvist completeness and correspondence result from modal logic 
can be extended to the axiomatic systems of modal $\mu$-calculus for this semantics \cite{BH12a}. 
We note that completeness results for axiomatic systems of modal $\mu$-calculus with the standard semantics are very rare, and require highly complex machinery
\cite{Kozen83},  \cite{Wal95}, see also \cite{SantVen10} and \cite{tenCF10}. 
Note also that  axiomatic systems of modal conjugated $\mu$-calculus axiomatized by Sahlqvist formulas are closed under 
Dedekind-MacNeille completions via topological semantics  \cite{Bh12b}. However, these systems  
are not closed under Dedekind-MacNeille completions for the  standard semantics  \cite{Sant08}. 
Another motivation for studying topological semantics of fixpoint logic is that it provides 
an alternative view on fixpoints operators with new notions of expressivity and definability. For a comprehensive discussion on the importance 
of generalized models in logic, including modal fixpoint logic, we refer to  \cite{AvBBN14}. 
A rather different approach to interpret fixpoint formulas over topological spaces is taken in~\cite{GoldIan} where formulas are interpreted in the full powerset lattice and where modalities are interpreted via topological operations such as closure and topological derivative. 

We illustrate the difference between standard and topological fixpoint operators with an example. Consider the frame $(\bbN \cup \{\infty\},R)$ drawn in Figure~\ref{fig:ex}. 
We assume that the topology on the set is such that clopen sets are finite subsets of $\mathbb{N}$ and cofinite sets containing the point $\infty$. 
%Then it is easy to see that t
The denotation of the formula $\Diamond^\ast p$ is the set of points that ``see points in $p$ wrt the transitive closure of the relation $R$''. Therefore
%it is easy to see that 
$\Diamond^\ast p$ is equal to 
the set $\mathbb{N}$. Indeed, $\mathbb{N}$ is the least fixed point of the map $S \mapsto \{0\} \cup \Diamond S$, where $\Diamond S = \{s' \mid \exists s \in S. (s',s) \in R \}$. 
However, if we are looking for a least clopen fixpoint of this map then we see that this will be the set $\mathbb{N}\cup \{\infty\}$. 
Intuitively, the denotation of the formula $\Diamond^\ast p$ wrt the clopen semantics is the set of all points that  ``see points in $p$ wrt the \emph{topological transitive closure} of the relation $R$''.
%\marginpar{Do we need to note that?} 
Note that a similar operation was used in \cite{Venema04} for characterising in dual terms subdirectly irreducible modal algebras. 
%This operator can be understood as 
%the set of points that are on the boundary of the transitive closure of the set $p$. 

\begin{figure}
\begin{center}
\begin{tikzpicture}[->,thick]

%\node (isolated) at ( 0,-2) [circle,fill=black,inner sep=1pt,minimum size=8pt] {};
\node (0) at ( -3,0) [circle,fill=black,inner sep=1pt,minimum size=8pt,label=above:$0$, label=below:$p$] {};
\node (1) at ( -1.8,0)  [circle,fill=black,inner sep=1pt,minimum size=8pt,label=above:$1$] {};
\node (2) at ( -0.6,0)  [circle,fill=black,inner sep=1pt,minimum size=8pt,label=above:$2$] {};
\node (infty) at ( 2.5,0) [circle,fill=black,inner sep=1pt,minimum size=8pt,label=above:$\infty$] {};
\node (d1) at ( 0,0) [circle,fill=black,inner sep=1pt,minimum size=2pt] {};
\node (d2) at ( 0.5,0) [circle,fill=black,inner sep=1pt,minimum size=2pt] {};
\node (d3) at ( 1,0) [circle,fill=black,inner sep=1pt,minimum size=2pt] {};
%\path (isolated) edge (1);
%\draw [->] (isolated) -- (1);
%\draw [->,thick] (isolated) -- (0);
%\path (isolated) edge (2);
%\draw [->,thick] (isolated) -- (infty);
\draw [->,thick] (1) -- (0);
\draw [->,thick] (2) -- (1);
\path (infty) edge [loop right] (infty);
\end{tikzpicture}\vspace{-1cm}
\end{center} 
\label{fig:ex}
\caption{Example}
\end{figure}

In this paper, we aim to advance the study of topological fixpoint logics by developing a game semantics for them. 
We will concentrate on a variant of topological fixpoint logic based on interpretations via clopen sets. 
For clopen sets we consider Stone spaces with a binary relation (\emph{descriptive $\mu$-frames} in the terminology of \cite{AKM95} and \cite{BH12a}).
The advantage of clopen sets is that the denotation of modal formulas in clopen sets is the same as in the standard Kripke semantics of modal logic. The negation of 
a formula is interpreted as the complement, conjuction and disjunction as the intersection and union, respectively, and the modal operators are also interpreted in the standard way. 
However, clopen sets of an arbitrary Stone space do not form a complete lattice and therefore the fixpoint operators, in general, may not be interpreted in Stone spaces
with the clopen semantics. Therefore, we need to restrict to a class of Stone spaces where these operators can be interpreted. We will achieve this by looking at 
relational structures based on \emph{extremally disconnected spaces} which is  a subclass of 
descriptive $\mu$-frames. 

There are several motivations for developing the game semantics for the topological $\mu$-calculus. Firstly, the semantics of a formula can be
usually much better understood when formulated in terms of games. This is especially true for formulas with some non-trivial
interplay of least and greatest fixpoint operators. Secondly, a game semantics is crucial for the development of automata-theoretic
methods of the topological $\mu$-calculus:  the game semantics provides an ``operational'' semantics for the formulas of the logic and
the definition of a run of an automata (or of its acceptance game) is entirely based on this operational view on the  truth of a formula. 
Thirdly, the game semantics is an important tool for developing the model-theory of the topological $\mu$-calculus.

The main contribution of this paper is a game semantics for the topological $\mu$-calculus based on clopen sets. 
Technically, the main result is the proof of adequacy of our game semantics. Finally we are demonstrating how the game semantics
can be used in order to obtain model-theoretic results: we prove that the topological $\mu$-calculus is invariant under what we call
clopen bisimulations.

We view the results in this paper as  first steps towards a full  theory of topological fixpoint logics. An ultimate goal is to define  
game semantics and automata for all descriptive $\mu$-frames (not necessarily based on extremally disconnected spaces). 
This would enable us to apply the methods of games and automata for tackling problems such as decidability and the finite model property 
of  axiomatic systems of the modal $\mu$-calculus. 
These systems are complete for descriptive $\mu$-frames, whereas their completeness for the standard Kripke semantics is quite problematic.

\section{Preliminaries}

\subsection{Two Player graph games}

Two-player infinite graph games, or \emph{graph games} for short, are defined
as follows.
For a more comprehensive account of these games, the reader is referred 
to~\cite{grae:auto02}.

%First some preliminaries on sequences.
%Given a set $A$, let $A^{*}$ and $A^{\om}$  denote the 
%collections of finite and infinite sequences over $A$, respectively.
%(Thus, $A^{\star} = A^{*} \cup A^{\om}$.)

A graph game is played on a \emph{board} $B$, that is, a set of \emph{positions}.
Each position $b \in B$ \emph{belongs} to one of the two \emph{players}, $\eloi$
(\'Eloise) and $\abel$ (Ab\'elard). 
Formally we write $B = B_{\eloi} \cup B_{\abel}$, and for each position $b$ we 
use $P(b)$ to denote the player $i$ such that $b \in B_{i}$.
Furthermore, the board is endowed with a binary relation $E$, so that each
position $b \in B$ comes with a set $E[b] \sse B$ of \emph{successors}. Note
that we do not require the games to be strictly alternating, i.e., successors of
positions in $B_\eloi$ or $B_\abel$ can lie again in $B_\eloi$ or $B_\abel$, respectively.
Formally, we say that the \emph{arena} of the game consists of a directed
two-sorted graph $\bbB = (B_{\eloi} ,  B_{\abel}, E)$.

A \emph{match} or \emph{play} of the game consists of the two players moving
a pebble around the board, starting from some \emph{initial position} $b_{0}$.
When the pebble arrives at a position $b \in B$, it is player $P(b)$'s turn
to move; (s)he can move the pebble to a new position of their liking, but 
the choice is restricted to a successor of $b$.
Should $E[b]$ be empty then we say that player $P(b)$ \emph{got stuck} at
the position.
A \emph{match} or \emph{play} of the game thus constitutes a (finite or 
infinite) sequence of positions $b_{0}b_{1}b_{2}\ldots\ $ such that 
$b_{i}Eb_{i+1}$ (for each $i$ such that $b_{i}$ and $b_{i+1}$ are defined).
A \emph{full play} is either (i) an infinite play or (ii) a finite play in
which the last player got stuck.
A non-full play is called a \emph{partial} play. 
%The rules of the game associate a \emph{winner} and (thus) a \emph{looser} for
Each full play of the game has a \emph{winner} and a \emph{loser}.
A finite full play is lost by the player who got stuck; the winning condition
for infinite games is usually specified using 
a so-called \emph{parity function}. In our paper, however, we specify the winning conditions
on infinite games in more intuitive terms, stating explicitly which infinite plays will be won
by which player. Throughout the paper the reader should take it for granted that the winning conditions 
involved could easily be encoded using suitable parity functions. 
%Therefore we can assume that the games we are dealing with
%are history-free determined (see below).
%
%$\Om: B \to \om$
%with finite range and $\eloi$ is the winner of $\beta \in
%B^{\om}$ if  $\max \big( \Inf(\Om\circ\beta) \big)$ is even.
%In words, $\eloi$ wins a match if the highest parity encountered infinitely
%often during the match, is even.

%In a \emph{parity game}, the set $\Ref$ is defined in terms of a \emph{parity
%function} on the board $B$, that is, a map $\Om: B \to \om$ with finite
%range.
%More specifically, the set $\Ref$ is defined by
%\begin{equation}
%\label{e-p-1}
%B^{\om}_{\Om} := \{ \beta \in B^{\om} \mid
%   \max \big( \Inf(\Om\circ\beta) \big) \mbox{ is even} \}
%\end{equation}
%(where $\Inf$ was defined at the beginning of this subsection).

A \emph{strategy} for player $i$ tells a player how to play to at a given game position: 
this can be represented as a \emph{partial} function mapping partial plays $\beta
= b_{0}\cdots b_{n}$ with $P(b_{n}) = i$ to legal next positions, that
is, to elements of $E[b_{n}]$, and that it is undefined if 
$E[b_{n}] = \emptyset$.
A strategy is \emph{history free} if it only depends on the current 
position of the match, and not on the history of the match. 
%Occasionally, it will be convenient to extend the name `strategy' to
%arbitrary functions mapping partial plays to positions; in order words, we
%allow strategies enforcing \emph{illegal} moves.
%In this context, the strategies proper, that is, the ones that dictate
%admissible moves only will be called \emph{legitimate}.
A strategy is \emph{winning for player $i$} from position $b \in B$ if it
guarantees $i$ to win any match with initial position $b$, no matter how
the adversary plays --- note that this definition also applies to positions
$b$ for which $P(b) \neq i$.
A position $b \in B$ is called a \emph{winning position} for player $i$, if 
$i$ has a winning strategy from position $b$; the set of winning positions
for $i$ in a game $\G$ is denoted as $\Win_{i}(\G)$.
Parity games enjoy \emph{history-free determinacy}, ie., at each position
of the game board one of the player has a history free winning strategy
(cf.~\cite{most:game91,emer:tree91}).

\subsection{Tarski's fixpoint game}

Recall that on any complete lattice the least fixpoint $\mu F$ and the greatest
fixpoint $\nu F$
of a monotone function $F$ exist and can be obtained as follows:
first we define for each ordinal $\alpha \in \Ord$ two sequences
$\{F^\mu_\alpha\}_{\alpha \in \Ord}$ and $\{F^\nu_\alpha\}_{\alpha \in \Ord}$ by putting
\[
\begin{array}{rclcrclcrcl}
	F^\mu_0 & = & \perp, &  \quad &  F^\mu_{\alpha+1} = F (F^\mu_{\alpha})  & \mbox{ and } &
	F^\mu_{\alpha} & = &  \bigvee_{\beta < \alpha} F^\mu_\beta  \quad  \mbox{for } \alpha \mbox{ a limit ordinal.} \\
	F^\nu_0 & = & \top, &  \quad &  F^\nu_{\alpha+1} = F (F^\nu_{\alpha})  & \mbox{ and } & 
	F^\nu_{\alpha} & = &  \bigwedge_{\beta < \alpha} F^\nu_\beta  \quad \mbox{for } \alpha \mbox{ a limit ordinal.} \\
%	 \nu f & = & \bigwedge_{\alpha \in \Ord} f^\alpha(\top) 
\end{array}
\]

The core of the game-theoretic semantics of the modal $\mu$-calculus is based on
Tarski's game-theoretic characterisation of fixpoints. Given a 
monotone function $F: \Pow X \to \Pow X$,  the game board of the standard 
fixpoint game
is defined as follows:

\begin{center}\label{Tarski_game}
\begin{tabular}{r|c|l}
Position & Player & Moves \\ %& $\Om$ \\
\hline \hline 
		$x \in X$ & $\exists$ & $\{ C \subseteq X \mid
		x \in F(C) \}$  \\
%		$(C,\forall) $ & $\forall$ & $\{ O \in \Op{X} \mid 
%			O \cap C \not= \emptyset \}$ \\
		$C \subseteq X$ & $\forall$ & $C$  
	\end{tabular} %\marginpar{types of left column in separate column?}
	\end{center}
%where $i=1$ in the \emph{least} fixpoint game and 
%$i=0$ if the \emph{greatest} fixpoint game. % is supposed to characterise the greatest fixpoint of $F$.
We will use the above notation in the following to introduce graph games: the table specifies that
$B_\exists = X$, $B_\forall = \Pow X$ and in the third column of the table the successors of each game board
position are specified. 
%Finally, the last column specifies the value of the parity function
%$\Om: B \to \om$ on a given state. 
The condition on infinite plays in the standard fixpoint game is that all infinite plays of the game are won by $\forall$ in the least fixpoint game
and by $\eloi$ in the greatest fixpoint game.
%- the reader is invited to check
%this in order to familiarise himself with the parity condition.

It is a standard result in fixpoint theory (cf. e.g.~\cite{vene:mu}) that the above least and greatest fixpoint games characterise the least and greatest fixpoint of $F$,
respectively. For example, $\eloi$ has a winning strategy at a position $x \in X$ in the least fixpoint game iff
$x$ is an element of $\mu F$. If $x$ is an element of the least fixpoint, we know that there 
exists an ordinal $\alpha$ such that $x \in F^\mu_\alpha$. In case that $\alpha$ is a limit ordinal this means
that $x \in \bigvee_{\beta < \alpha} F^\mu_\beta = \bigcup_{\beta < \alpha} F^\mu_\beta \subseteq F(\bigcup_{\beta < \alpha} F^\mu_\beta)$
where the inclusion is easily verifiable.
This means $\eloi$ can move from position $x$ to position $\bigcup_{\beta < \alpha} F^\mu_\beta$ and $\abel$ is
forced to move to some $x' \in F^\mu_\beta$ with $\beta < \alpha$. Similary, if $\alpha = \beta + 1$,
 $\eloi$ can ensure that the play reaches a position in $F^\mu_{\beta}$ after one round. In any case, due to the well-foundedness of the
 ordinals, $\eloi$ can ensure
 that the play moves from $x \in F^\mu_\alpha$ to some $x \in F^\mu_\beta$ with $\beta < \alpha$ which implies that $\eloi$ has a strategy that forces 
 $\abel$ to get stuck after a finite number of moves. 
 
 %The game characterising the greatest fixpoint of $F$ has the same game board as the one for the least fixpoint, but in this case the parity condition
 %has to guarantee that $\eloi$ wins all infinite plays.

\subsection{Topological preliminaries} 
%\begin{itemize}
%	\item Notation $\bbX=(X,\tau,R)$ where $(X,\tau)$ is a topological space.
%	\item $\Clp(\bbX)$ clopens of $(X,\tau)$ where $\bbX = (X,\tau,R)$
%	\item $\RO(\bbX)$ regular opens of $(X,\tau)$ where $\bbX = (X,\tau,R)$ 
%	\item closure $\Cl(X)$ and interior $\Int(X)$ of some set $X$
%	\item which relational structure are we considering?
%\end{itemize}

%\marginpar{consider modal spaces only as we never talk about TKF's in general?}
% A \emph{topological Kripke frame} is a triple $(X,\tau,R)$ where $\bbX=(X,\tau)$ is a topological space and $R\subseteq X^2$ a binary relation.
We will work with Kripke frames that are endowed with a topology. %carry a topological strucutre.
The most important class of such frames used in the study of modal logic is that of \emph{modal spaces} (aka \emph{descriptive frames}).
This is due to the Stone representation theorem for Boolean algebras and J\'onsson-Tarski representation theorem for
Boolean algebras with operators. 
A \emph{modal space} is a triple $(X,\tau,R)$ such that $\bbX =(X, \tau)$ is a Stone space and $R \subseteq X \times X$ is a binary relation that is \emph{point-closed}
and \emph{clopen}. The latter mean that $R(x) = \{y\in X: x R y\}$ is a closed set for each $x\in X$ and that $\Diamond U\in \Clp(\bbX)$  for each
$U \in \Clp(\bbX)$, where $\Clp(\bbX)$ is the set of all clopen subsets of $\bbX$ and $\Diamond U = \{ x \in X \mid \exists y \in U. \; x R y\}$.
Every modal algebra can be represented as the algebra $(\Clp(\bbX), \Diamond)$, where $\bbX$ is the ultrafilter space.
As a result every axiomatic system of modal logic is complete wrt modal spaces. We refer to \cite{BdRV01} for more details on
completeness of modal logics wrt modal spaces. We also note that 
modal spaces can be also represented as Vietoris coalgebras on the category of Stone spaces \cite{KKV04}.
Throughout this paper we will tacitly assume that all topological Kripke frames are modal spaces.
%\marginpar{Give modal spaces based on extremally disconnected space a name? Nick: Extremally disconnected modal spaces?}

A Stone space $\bbX=(X,\tau)$ is called \emph{extremally disconnected} if the closure of any open subset of 
$\bbX$ is open. It is well known (see e.g., \cite{Sik60}) that if $\bbX$ is an extremally disconnected space, then 
$\Clp(\bbX)$ is a complete Boolean algebra. Moreover, for a set of clopen sets $\{U_i: i\in I\}$ the infinite meets and joins are computed as:
$\bigvee \{U_i: i\in I\} = \Cl (\bigcup \{U_i: i\in I\})$ and $\bigwedge \{U_i: i\in I\} = \Int (\bigcap \{U_i: i\in I\})$.
We call a modal space $(X,\tau,R)$ an {\em extremally disconnected modal space} if $(X,\tau)$ is extremally disconnected.

%In \cite{BBH12} modal spaces where generalised to \emph{modal compact Hausdorff spaces} (MKH-spaces, for short). MKH-spaces can be viewed
%as analogues of modal spaces in the  compact Hausdorff setting. They are also represented as Vietoris coalgebras on the category of compact Hausdorff
%spaces. An MKH-space is $\bbX=(X,\tau,R)$ a point-closed topological Kripke frame such that $\Diamond_R(U)\in \Op(\bbX)$ for each
%$U \in \Op(\bbX)$, where $\Op(\bbX)$ is the set of all open sets of $(X,\tau)$, and $\Diamond_R(F)\in \Cl(\bbX)$ for each
%$F \in \Op(\bbX)$, where $\Cl(\bbX)$ is the set of all closed sets of $(X,\tau)$. A set $U\subseteq X$ is called \emph{regular open} if
%$\Int(\Cl(U)) = U$, where $\Cl$ and $\Int$ are the closure and interior operators, respectively. We let
%$\RO(\bbX)$ denote the set of all regular opens sets of $(X,\tau)$ where $\bbX = (X,\tau,R)$. Recall that regular open sets form a Boolean
%algebra where the meet  is the intersection, the negation is the interior of the complement, and the join  is the
%interior of the closure of the union.

\subsection{Modal $\mu$-calculus on topological spaces: denotational semantics}
The complete lattice structure on $\Clp(\bbX)$ of an extremally disconnected space $\bbX=(X,\tau)$ enables us to define
a topological semantics of the modal $\mu$-calculus that is based on clopen sets.
\begin{definition}
		Given a countably infinite set $\Prop$ of propositional variables ($p,q,p_0,q_1$, etc), 
		the language $\Lang_\mu$ of the modal $\mu$-calculus  is inductively defined as follows: 
		\begin{eqnarray*}
			\Lang_\mu \ni \varphi & \mathrel{::=} & p, \; p \in \Prop \mid \neg p, \; p\in \Prop \mid \varphi \wedge \varphi \mid \varphi \vee \varphi 
			\mid \bot \mid \top \mid \Diamond \varphi \mid \Box \varphi \mid \\
			& & \quad \mu p. \varphi(p,q_1,\dots,q_n) \mid 
			\nu p. \varphi(p,q_1,\dots,q_n)  
		\end{eqnarray*}
	 	where in formulas of the form $\mu p .\varphi$ and $\nu p. \varphi$ we require that the variable 
		$p$ does not occur under a negation\footnote{Formulas are always in negation normal form, ie., negations 
		only occur in front of propositional variables.}.
		The sets $\FV(\varphi)$ and $\BV(\varphi)$ of free and bound variables of a given formula
		$\varphi \in \Lang_\mu$ are defined in a standard way.
	\end{definition} 

\begin{definition}
		Given an extremally disconnected modal space $(\bbX,R)$ based on a space $\bbX=(X,\tau)$ and a valuation $V: \Prop \to \Clp(\bbX)$ 
	we define the semantics $\lsem \varphi \rsem_V^{\bbX} \in \Clp(\bbX)$ of a formula $\varphi$ 
	by induction:
		\[
		\begin{array}{rclcrcl}
			\lsem p \rsem_V & \mathrel{:=} & V(p) & \qquad & \lsem \neg p \rsem_V & \mathrel{:=} & X \setminus V(p)\\ 
			\lsem \psi_1 \wedge \psi_2 \rsem_V & \mathrel{:=} & \lsem \psi_1 \rsem_V \cap \lsem \psi_2 \rsem_V  & & 
			\lsem \psi_1 \vee \psi_2 \rsem_V & \mathrel{:=} & \lsem \psi_1 \rsem_V \cup \lsem \psi_2 \rsem_V  \\
			\lsem \perp \rsem_V & \mathrel{:=} & \emptyset & &
			\lsem \top \rsem_V & \mathrel{:=} & X \\
			\lsem \Diamond \psi\rsem_V & \mathrel{:=} & \{ x \in X \mid R(x) \cap \lsem \psi \rsem_V \not= \emptyset  \} & &
			\lsem \Box \psi \rsem_V  & \mathrel{:=} & \{ x \in X \mid R(x) \subseteq \lsem \psi \rsem_V \} \\
			\lsem \mu p .\psi \rsem_V & \mathrel{:=} & \lfp(\operator{p}{V}{\psi}) & &
			\lsem \nu p .\psi \rsem_V & \mathrel{:=} & \gfp(\operator{p}{V}{\psi}) \\
		\end{array}
		\]
		where $\operator{p}{V}{\psi}: \Clp(\bbX) \to \Clp(\bbX)$ is the (monotone) operator
		defined by $\operator{p}{V}{\psi}(U) \mathrel{:=} \lsem \psi \rsem_{V[p \mapsto U]}$ for $U \in \Clp(\bbX)$ and with
			\[ V[p \mapsto U](q)  \mathrel{:=}  \left\{ \begin{array}{l}
			U \mbox{ if } q = p \\
			V(q) \mbox{ otherwise.} 
		\end{array} \right. \] 
		We call the triple $\M = (\bbX,R,V)$ an extremally disconnected (Kripke) model 
		and write $\M[p \mapsto U]$ 
		to denote the model $\M = (\bbX,R,V[p \mapsto U])$.
		%where $p \in \Prop$ and $U \in \Clp(\bbX)$.
		\end{definition}

\section{Games for monotone operators on topological spaces}
In this section we are going to define topological analogues of the fixpoint game from page~\pageref{Tarski_game}. 
%\subsection{Extremally disconnected spaces}
We start by looking at fixpoints of a monotone function $F:\Clp(\bbX) \to \Clp(\bbX)$ 
on the lattice of clopen subsets $\Clp(\bbX)$ of an extremally disconnected Stone space $\bbX = (X,\tau)$. 
This assumption on the topology guarantees
the existence of a least and greatest fixpoint of $F$ and these fixpoints can be obtained using the ordinal approximants 
%\footnote{not sure whether this is a good name}
$F^\mu_\alpha$ and $F^\nu_\alpha$, respectively. 
To understand how the fixpoint game has to be defined we need to inspect how the ordinal approximants $F^\mu_\alpha$ and
$F^\nu_\alpha$ are computed in case $\alpha$ is a limit ordinal: 

{\small
\begin{eqnarray*}
	F^\mu_\alpha & = & \bigvee_{\beta < \alpha} F^\mu_\beta = \Cl(\bigcup_{\beta < \alpha} F^\mu_\beta) \\
	F^\nu_\alpha & = & \bigwedge_{\beta < \alpha} F^\nu_\beta = \Int(\bigcap_{\beta < \alpha} F^\nu_\beta) \\
\end{eqnarray*}}

Therefore, intuitively speaking, in order to maintain the claim that a given point $x$ is an element of $\mu F$ it suffices that $\eloi$ provides some open set
$O \subseteq X$ such that $x \in F(\Cl(O))$, so this will become easier for $\eloi$. Likewise, in order to prove that
$x \in \nu F$, $\eloi$ will now have to provide some closed set $C$ such that $x \in F(\Int(C))$ which is potentially more difficult compared to the standard fixpoint game. Note that in both cases $\Cl(O)$ and $\Int(C)$ are clopen as the closure of an open set and the interior of a closed set are clopen sets in an extremally disconnected Stone space.
Our observations form the basis for the following definitions
of the fixpoint games:

\begin{definition}
        Let $\bbX=(X,\tau)$ be an extremally disconnected topological space and let 
        $F: \Clp(\bbX)  \to \Clp(\bbX)$ be a monotone
	map. 
	%with least fixpoint $\mu f \in \Clp(\bbX)$ and largest fixpoint
	%$\nu f \in \Clp(\bbX)$. 
	We define two graph games. We start with the game board of the least fixpoint game 
	$\game^I_\mu(F)$:
%	(they only differ in the parity function)
%	$\game^I_\mu(f) = (V_\exists, V_\forall, E)$ by putting 
%	$V_\exists = X$ (``elements of $X$ are positions of $\exists$''),
%	$V_\forall = \p X$ (``elements of $\p X$ are positions of $\forall$'')
%	and 
%	by putting
	\begin{center}
	{\small
	\begin{tabular}{r|c|l}
		Position & Player & Moves \\
		\hline \hline 
		$x \in X$ & $\exists$ & $\{ C \subseteq X \mid
		x \in F(U)  \mbox{ for all } U \in \Clp(\bbX) 
		\mbox{ with } C \subseteq U \}$ \\
%		$(C,\forall) $ & $\forall$ & $\{ O \in \Op{X} \mid 
%			O \cap C \not= \emptyset \}$ \\
		$C \subseteq X$ & $\forall$ & $C$ 
	\end{tabular}}
	\end{center}
	ie, at a position $x \in X$, player $\exists$ has to move to some $C \subseteq X$
	such that $x\in F(U)$  for all clopen supersets of
	$C$ and at position $C \subseteq X$ player $\forall$ has to move to some $x' \in C$.
	Infinite plays are won by $\forall$.
	The resulting graph game will be called the least clopen fixpoint game and will be denoted by $\game^I_\mu(F)$. 
%	is complete
%	if either one of the player does not have a legal move (``gets stuck'', the play ends on a position $p$ with $E[p] = \emptyset$) or if the play is infinite. A finite complete play is lost by the 
%	player who got stuck, an infinite play is won by $\forall$.
	The greatest clopen fixpoint game $\game^I_\nu(F)$ is 
	defined similarly with the major
	difference that an infinite play is won by $\exists$. Also,
	the game board of $\game^I_\nu(F)$ reflects the aforementioned
	way of computing meets in $\Clp(\bbX)$:
	\begin{center}
	{\small
	\begin{tabular}{r|c|l}
		Position & Player & Moves \\
		\hline \hline 
		$x \in X$ & $\exists$ & $\left\{ C \subseteq X \mid
		x \in F(U)  \mbox{ for all } U \in  \Clp(\bbX) \mbox{ with } 
		\Int(C) \subseteq U \right\}$ \\
%		\upper(\{V \in \Clp(\bbX) \mid  V \subseteq C \}) \right\} $ \\
		$C \subseteq X$ & $\forall$ & $C$ 
%		$(x',\exists)$ & $\exists$ & $\{ O \in \Op{X} \mid 
%			x' \in O \}$ \\
%		$\Op{X} \ni O$ & $\forall$ & $O$
	\end{tabular}}
	\end{center}
%	where $\upper(\Gamma)$ denotes the collection of upper bounds of
%	a given $\Gamma \subseteq \Clp(\bbX)$.
\end{definition}
With these definitions at hand it is not difficult to prove that $\game^I_\mu(F)$ and $\game^{I}_\nu(F)$ indeed characterise the least
and greatest clopen fixpoints of $F$, respectively. This is the content to the following proposition.
\begin{proposition}\label{prop:first_fp_game}
	Let $\bbX= (X,\tau)$ be an extremally disconnected  space, let $F:\Clp(\bbX)\to \Clp(\bbX)$ be a monotone operator.
	 Then for any $x \in X$ we have 
	 \begin{enumerate}
	 	\item $x \in \mu F$ iff $x \in \Win_\exists(\game^I_\mu(F))$
	 	\item $x \in \nu F$ iff $x \in \Win_\exists(\game^I_\nu(F))$
	\end{enumerate}
\end{proposition}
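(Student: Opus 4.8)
My plan is to prove both parts by relating the clopen fixpoint games to the ordinal approximants $F^\mu_\alpha$ and $F^\nu_\alpha$, adapting the argument already sketched for the standard Tarski game. The two directions of each biconditional require genuinely different techniques: the ``membership implies winning'' direction uses an ordinal-induction strategy for $\eloi$, while the ``winning implies membership'' direction uses the closure properties of $\mu F$ and $\nu F$ as fixpoints.

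For part (1), left-to-right, I would follow the template from the Tarski-game discussion almost verbatim, but with the key topological twist: if $x \in \mu F = F^\mu_\gamma$ for some least ordinal $\gamma$, I want $\eloi$ to move to a set $C$ from which $\abel$ is forced into a strictly lower approximant. Here the crucial observation is that the limit-stage computation is $F^\mu_\alpha = \Cl(\bigcup_{\beta<\alpha} F^\mu_\beta)$, so for a limit ordinal the open set $O = \bigcup_{\beta<\alpha} F^\mu_\beta$ satisfies $x \in F(\Cl(O))$, and since $\Cl(O)$ is clopen, $C := O$ is a legal move for $\eloi$ (every clopen superset $U$ of $O$ contains $\Cl(O)$, so $x \in F(U)$ by monotonicity). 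Whatever point $x'$ the adversary picks from $C$ lands in some $F^\mu_\beta$ with $\beta < \alpha$; the successor-ordinal case is even simpler, with $\eloi$ moving to the clopen $F^\mu_\beta$ itself. Well-foundedness of the ordinals then guarantees $\abel$ gets stuck, so $\eloi$ wins. For the right-to-left direction I would argue contrapositively or directly via the fixpoint property: since $\mu F = F(\mu F)$ and $\mu F$ is clopen, from any $x \in \mu F$ the set $C = \mu F$ is a legal $\eloi$-move, and since $\mu F \subseteq \mu F$, every point $\abel$ chooses is again in $\mu F$; so $\eloi$ can maintain ``the current position lies in $\mu F$'' forever. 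But an infinite play is lost by $\eloi$ in the least fixpoint game, so this invariant alone does not suffice --- I must instead combine it with the ordinal-rank argument above, which I expect is why the proposition is phrased via approximants. The cleanest route is: $x \in \Win_\eloi$ forces $x$ into $\mu F$ by showing any winning strategy, played against an adversary who always descends in rank, terminates, and termination of $\eloi$-winning finite plays means $\abel$ got stuck, i.e. reached a point $x'$ with $x' \in F(U)$ for the empty set's clopen supersets, forcing $x' \in F(\emptyset)$-type base membership that propagates back up.

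For part (2), the roles of the fixpoint and the infinite-play winner are dualised. Now $\eloi$ wants to defend membership in $\nu F = F^\nu_\gamma$, the limit meet being $F^\nu_\alpha = \Int(\bigcap_{\beta<\alpha} F^\nu_\beta)$, which is exactly why the game board uses the condition $x \in F(U)$ for all clopen $U \supseteq \Int(C)$. The natural strategy is the invariant-maintenance argument: since $\nu F = F(\nu F)$ with $\nu F$ clopen, $\eloi$ moves to $C = \nu F$ (legal because $\Int(\nu F) = \nu F$ is clopen, so the only relevant clopen superset is $\nu F$ itself and $x \in F(\nu F) = \nu F$), and $\abel$ is forced to stay inside $\nu F$. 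This time infinite plays are won by $\eloi$, so maintaining the invariant forever \emph{is} a win, and $\eloi$ never gets stuck because $\nu F$ is nonempty at every visited point. The converse, $x \in \Win_\eloi \Rightarrow x \in \nu F$, I would prove by showing the winning region $W := \Win_\eloi(\game^I_\nu(F))$ is a clopen post-fixpoint, i.e. $W \subseteq F(W)$; the Knaster--Tarski argument then yields $W \subseteq \nu F$. Establishing $W \subseteq F(W)$ is the step I expect to be the main obstacle: for $x \in W$, $\eloi$'s winning move gives a set $C$ with $x \in F(U)$ for all clopen $U \supseteq \Int(C)$, and I must argue that $\Int(C)$ (or a suitable clopen approximation of it) is contained in $W$, using that $\eloi$ still wins from every point $\abel$ could move to. The friction is purely topological --- reconciling the arbitrary-set move $C$ with the clopen-valued operator $F$ via the $\Int$/$\Cl$ machinery of the extremally disconnected space --- and getting the quantifier over clopen supersets to interact correctly with the successor set $C$ is where the care is needed; the combinatorial game-theoretic skeleton is otherwise identical to the classical Tarski case.
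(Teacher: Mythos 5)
Your two ``membership implies winning'' directions are essentially the paper's: for $\mu$ the ordinal-descent strategy (with the correct topological observation that any clopen superset of $\bigcup_{\beta<\alpha}F^\mu_\beta$ contains its closure $F^\mu_\alpha$, so the move is legal), and for $\nu$ the invariant strategy of repeatedly playing the clopen set $\nu F$. The genuine problems are in both converse directions, which are exactly where the content of the proposition lies.

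For part (2), your plan to show that $W:=\Win_\exists(\game^I_\nu(F))$ is a ``clopen post-fixpoint'' and then invoke Knaster--Tarski fails at the first step: there is no reason whatsoever for the winning region of this game to be a clopen (or even topologically describable) subset of $X$, and since $F$ is only defined on $\Clp(\bbX)$, the inclusion $W\subseteq F(W)$ is not even well-formed; moreover $\nu F$ is the greatest post-fixpoint only among \emph{clopen} sets, so a non-clopen post-fixpoint-like property would not yield $W\subseteq\nu F$ anyway. You flag the step $W\subseteq F(W)$ as ``the main obstacle'' but do not resolve it, and it cannot be resolved in this form. The paper avoids the issue entirely by a transfinite induction in the opposite direction: it shows $X\setminus F^\nu_\alpha\subseteq\Win_\forall(\game^I_\nu(F))$ for every ordinal $\alpha$. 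The key topological step is at limit ordinals: if $\eloi$ plays $C$ with $C\subseteq\bigcap_{\beta<\alpha}F^\nu_\beta$, then $\Int(C)\subseteq\Int(\bigcap_{\beta<\alpha}F^\nu_\beta)=\bigwedge_{\beta<\alpha}F^\nu_\beta$, which is clopen, so legality of the move forces $x\in F(\bigwedge_{\beta<\alpha}F^\nu_\beta)\subseteq\bigwedge_{\beta<\alpha}F^\nu_\beta=F^\nu_\alpha$, a contradiction; hence $\abel$ can always escape below some earlier approximant and apply the induction hypothesis. Determinacy then converts $X\setminus\nu F\subseteq\Win_\forall$ into $\Win_\exists\subseteq\nu F$.

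For part (1), your converse is confused: you begin by arguing the wrong direction (legality of moves for points already in $\mu F$), the phrase ``adversary who always descends in rank'' has no meaning when $x\notin\mu F$ (such points have no rank), and the final ``propagates back up'' step is never justified. The clean argument --- presumably what the paper means by ``very similar'' --- is again an invariant argument, this time for $\abel$: if $x\notin\mu F$ and $\eloi$ plays any legal $C$, then $C\not\subseteq\mu F$, for otherwise $\mu F$ would be a clopen superset of $C$ and legality would give $x\in F(\mu F)=\mu F$; so $\abel$ picks some $x'\in C\setminus\mu F$ and maintains the invariant forever, winning because infinite plays of $\game^I_\mu(F)$ are won by $\abel$. (Your tree-induction idea can be repaired, but only by supplying precisely this missing mechanism: inductively $C\subseteq\mu F$, hence $\mu F$ is a clopen superset of $C$, hence $x\in F(\mu F)=\mu F$; as written, the proposal does not contain it.) Note that in both converses the single fact doing all the work is that the legality condition quantifies over \emph{clopen supersets} of $C$ (resp.\ of $\Int(C)$), which lets one instantiate it at the clopen sets $\mu F$, $F^\nu_\beta$, or $\bigwedge_{\beta<\alpha}F^\nu_\beta$; your proposal never isolates this.
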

\begin{proof}
	We only provide the proof for the greatest fixpoint game $\game^I_\nu(F)$ - the one for the least fixpoint game is very similar.
	We need to show that $\Win_\exists(\game^I_\nu(F)) = \nu F$.
	Suppose first that $x \in \nu F \in \Clp(\bbX)$. Then
	$\exists$ has an obvious winning strategy: she is playing
	the set $\nu F$. All $\forall$ can do is choosing another element $x' \in \nu F$ after which $\exists$ can move again
	to $\nu F$ and so forth. Note that any such play will be infinite and thus $\exists$ has a strategy to win any play starting at $x$, ie., $x \in \Win_\exists(\game^I_\nu(F))$.

For the converse we show that for
all ordinals $\alpha$ we have
$X \setminus F_\alpha^\nu \subseteq  \Win_\forall(\game^I_\nu(F))$ by induction on $\alpha$.	
%\begin{description}

\noindent
{\bf Case} $\alpha = 0$. Then the claim is obvious as
$X \setminus F_0^\nu = X \setminus X =\emptyset$.

\noindent
{\bf Case}  $\alpha = \beta +1$. Suppose that in a play starting
	at position $x \not\in F_\alpha^\nu = F(F_\beta^\nu)$ player $\exists$ moves to some $C \subseteq X$ with
	$x \in F(U)$ for all $U \in \Clp(\bbX)$ with  $\Int(C) \subseteq U$. 
	Clearly 
	$C \not\subseteq F_\beta^\nu$ for otherwise
	$\Int(C) \subseteq F_\beta^\nu$ and thus
	$x \in F(F_\beta^\nu) = F_\alpha^\nu$.
	Hence $\forall$ can pick
	an element $x' \in C \setminus F_\beta^\nu$.
	Now by I.H. we have that $x' \in \Win_\forall(\game^I_\nu(F))$ and thus $\forall$ has a strategy to win the play from now on. This shows that $\forall$ has a winning strategy at position
	$x$ in
	$\game_\nu(F)$ as required.
	
\noindent
{\bf Case} $\alpha$ is a limit ordinal. Consider some $x \not\in F_\alpha^\nu = \bigwedge
	F_\beta^\nu$ and let $C \subseteq X$ be chosen by $\exists$ as in the previous case. 
	By our assumption on the topology  we 		
	have $\bigwedge F_\beta^\nu = \Int (\bigcap F_\beta^\nu)$.	
	It is not difficult to see that $C \not\subseteq \bigcap F_\beta^\nu$ for suppose otherwise: then
	$\Int(C) \subseteq \Int(\bigcap F_\beta^\nu) = \bigwedge F_\beta^\nu$
	and thus $x \in F(\bigwedge F_\beta^\nu) \subseteq \bigwedge F_\beta^\nu$ which contradicts our assumption
	on $x$. Therefore there exists a $\beta < \alpha$ such that
	$C \not\subseteq F^\nu_\beta$, ie., such that there exists $x' \in C$ with $x' \not\in F^\nu_\beta$.
	By the induction hypothesis we know that $x' \in \Win_\forall(\game^I_\nu(F))$ and from position $x'$
	$\forall$ has a strategy to win the play. Therefore $\forall$ has a winning strategy
	from position $x$ as required.
%\end{description} \vspace{-5mm}
%\phantom{blabla}
%	
%	$\mathbf{\subseteq}$: Suppose that $x \not\in \nu F$. We show that then $\forall$ has a strategy
%	in $\game^I_\nu(F)$ at position $x$ such that either the game continues for one more ``round'', such
\end{proof}
This shows that the games $\game^I_\mu$ and $\game^I_\nu$ characterise the least and greatest clopen fixpoint 
of a monotone operator. %In order to define a game semantics of the topological modal $\mu$-calculus 
We will use these games
to prove adequacy of our game semantics for the topological modal $\mu$-calculus: If $\exists$ has a winning strategy in the evaluation game for a formula
of the form $\mu p. \varphi$ and $\nu p. \varphi$ then we will construct a winning strategy for her in the corresponding fixpoint games that we just discussed.
Vice versa we would like to transform winning strategies in the fixpoint games into winning strategies of the evaluation game for $\mu p.\varphi$ and
$\nu p.\varphi$. For this converse direction we will need second - but equivalent - versions of the fixpoint games. %We first discuss these alternative versions
%and then explain the relationship between the games.

\begin{definition}
%	\marginpar{consider this game in Proposition below?}
	Let $\bbX$ be an extremely disconnected space and let $F: \Clp(\bbX) \to \Clp(\bbX)$ be a monotone
	map. %with least fixpoint $\mu F \in \Clp(\bbX)$ and largest fixpoint
	%$\nu F \in \Clp(\bbX)$  
	As elements of $\Clp(\bbX)$ can occur both as position of $\exists$ and $\forall$, we clearly mark the owner of such a position using the set of markers $M = \{\exists,\forall\}$.
	We define the following two-player game $\game^{II}_\mu(F)$ by putting
	\begin{center}
	{\small
	\begin{tabular}{r|c|l}
		Position & Player & Moves \\
		\hline \hline 
		$x \in X$ & $\exists$ & $\{ (\forall,U) \in M \times \Clp(\bbX) \mid
		x \in F(U) \}$ \\
		$(\forall,U) \in M \times \Clp(\bbX)  $ & $\forall$ & $\{(\exists,U') \in M \times \Clp(\bbX) \mid U \cap U'
		\not= \emptyset \}$ \\
%		$(C,\forall) $ & $\forall$ & $\{ O \in \Op{X} \mid 
%			O \cap C \not= \emptyset \}$ \\
%		$X \ni x$ & $\exists$ & $\{ U' \in \Clp(\bbX) \mid x \in U'\}$ \\
		$(\exists,U) \in M \times \Clp(\bbX) $ & $\exists$ & U \\
	\end{tabular}}
	\end{center}
	ie, at a position $x \in X$, player $\exists$ has to move to some clopen set $U \subseteq X$
	such that $x\in F(U)$, $\forall$ challenges this by playing a element $U' \in \Clp(\bbX)$ 
	with $U \cap U' \not= \emptyset$
	%containing a clopen neighbourhood $U'$ of $x$
%	with $U \cap U' \not= \emptyset$ (one can think of this move as $\forall$
%	choosing an element $x' \in U$ and a clopen neighbourhood $U'$ of $x'$) 
	and at position $(\exists,U') \in M \times \Clp(\bbX)$ player $\exists$ has to move to some $x' \in U'$.
	Again $\forall$ wins all infinite plays of the game.
	Similarly we define the game $\game_\nu^{II}(F)$ by defining the following game board 
	and by stipulating that $\eloi$ wins all infinite plays:
	\begin{center}
		\begin{tabular}{r|c|l}
		Position & Player & Moves  \\
		\hline \hline 
		$x \in X$ & $\exists$ & $\left\{ U \in \Clp(\bbX) \mid
		x \in F(U) \right\}$ \\
%		\upper(\{V \in \Clp(\bbX) \mid  V \subseteq C \}) \right\} $ \\
		$U \in \Clp(\bbX)$ & $\forall$ & $U$  \\
%		$(x',\exists)$ & $\exists$ & $\{ O \in \Op{X} \mid 
%			x' \in O \}$ \\
%		$\Op{X} \ni O$ & $\forall$ & $O$
	\end{tabular}
	\end{center}
\end{definition}
\begin{rema}
    {\em The reader familiar with fixpoint games might be surprised and slightly worried as there is an unexpected asymmetry between the games $\game_\mu^{II}(F)$ and $\game_\nu^{II}(F)$. Both games have in fact been derived from two completely symmetric games with the following game boards (omitting the markers in $M$) and the usual winning conditions for infinite least and greatest fixpoint games:
    
   {\hspace{-3mm}\footnotesize
    \begin{center}
    	\begin{tabular}{ccc}
			\begin{tabular}{lr|c|l}
				$\game_\mu$ & Position & Pl. & Moves  \\
		\hline \hline 
				& $x \in X$ & $\exists$ & $\left\{ U \in \Clp(\bbX) \mid
		x \in F(U) \right\}$ \\
				& $U \in \Clp(\bbX)$ & $\forall$ & $U$ \\
				& $x' \in X$ & $\forall$ & $\{ U' \in \Clp{\bbX} \mid x' \in U'\}$ \\
				& $U' \in \Clp(\bbX)$ & $\exists$ & $U'$
			\end{tabular}
			& 
			\begin{tabular}{lr|c|l}
				$\game_\nu$ &	Position & Pl. & Moves  \\
		\hline \hline 
				& $x \in X$ & $\exists$ & $\left\{ U \in \Clp(\bbX) \mid
		x \in F(U) \right\}$ \\
				& $U \in \Clp(\bbX)$ & $\forall$ & $U$ \\								
				& $x' \in X$ & $\exists$ & $\{ U' \in \Clp{\bbX} \mid x' \in U'\}$ \\
				& $U' \in\Clp(\bbX)$ & $\forall$ & $U'$
			\end{tabular}
		\end{tabular}
    \end{center}
    }
    It is not difficult to see, however, that both games can be simplified to the games
    $\game_\mu^{II}(F)$ and $\game_\nu^{II}(F)$.}
\end{rema}

We will now show that games for $\mu$ and $\nu$ characterise the least and greatest clopen fixpoint. %respectively.

\begin{proposition}
	\label{prop:second_fp_game}
	Let $\bbX$ be an extremally disconnected  space, let $F:\Clp(\bbX)\to \Clp(\bbX)$ be a monotone operator.
	 Then for any $x \in X$ we have 
	 \begin{enumerate}
	 	\item $x \in \mu F$ iff $x \in \Win_\exists(\game^{II}_\mu(F))$.
	 	\item $x \in \nu F$ iff $x \in \Win_\exists(\game^{II}_\nu(F))$
	\end{enumerate}
\end{proposition}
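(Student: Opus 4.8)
The plan is to show directly that the winning region of each game coincides with the corresponding clopen fixpoint, following the same ordinal-approximation template used in the proof of Proposition~\ref{prop:first_fp_game}. The one structural fact that makes this work is that, since $F$ maps $\Clp(\bbX)$ into itself and since in an extremally disconnected space joins and meets of clopens are computed as $\Cl(\bigcup\cdot)$ and $\Int(\bigcap\cdot)$, every approximant $F^\mu_\alpha$ and $F^\nu_\alpha$ is itself clopen, and so are $\mu F$, $\nu F$ and their complements. I will use this throughout without further comment.

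For the greatest-fixpoint game $\game^{II}_\nu(F)$ the argument is essentially identical to that of Proposition~\ref{prop:first_fp_game}(2). For $\nu F \subseteq \Win_\exists(\game^{II}_\nu(F))$, from $x \in \nu F$ player $\exists$ simply keeps playing the clopen set $U := \nu F$ (legal since $x \in \nu F = F(\nu F)$), and every $\forall$-response $x' \in \nu F$ lands her back in the same situation, yielding an infinite play that she wins. For the converse I show $X \setminus F^\nu_\alpha \subseteq \Win_\forall(\game^{II}_\nu(F))$ by induction on $\alpha$: if $x \notin F^\nu_{\beta+1} = F(F^\nu_\beta)$ then any clopen $U$ with $x \in F(U)$ satisfies $U \not\subseteq F^\nu_\beta$, so $\forall$ can step to some $x' \in U \setminus F^\nu_\beta$ and apply the hypothesis; at a limit $\alpha$, if such a $U$ satisfied $U \subseteq \bigcap_{\beta<\alpha} F^\nu_\beta$ then, $U$ being open, $U = \Int(U) \subseteq \Int(\bigcap_{\beta<\alpha} F^\nu_\beta) = F^\nu_\alpha$ and hence $x \in F(U) \subseteq F^\nu_{\alpha+1} \subseteq F^\nu_\alpha$, a contradiction, so again $\forall$ finds $x' \in U \setminus F^\nu_\beta$ for some $\beta < \alpha$.

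The least-fixpoint game $\game^{II}_\mu(F)$ is the genuinely new case, because of its three-position round. For $\Win_\exists(\game^{II}_\mu(F)) \subseteq \mu F$ I describe a strategy keeping $\forall$ outside $\mu F$: whenever the pebble sits at some $x \notin \mu F$ and $\exists$ answers with a clopen $U$ satisfying $x \in F(U)$, monotonicity forces $U \not\subseteq \mu F$ (otherwise $x \in F(U) \subseteq F(\mu F) = \mu F$); hence the clopen set $U' := U \cap (X \setminus \mu F)$ is nonempty and meets $U$, so $\forall$ may play it, after which $\exists$ is compelled to pick $x' \in U' \subseteq X \setminus \mu F$. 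This invariant makes every play either infinite or stuck for $\exists$, so $\forall$ wins. For $\mu F \subseteq \Win_\exists(\game^{II}_\mu(F))$ I use the rank $\rho(x) := \min\{\alpha : x \in F^\mu_\alpha\}$ and design $\exists$'s strategy to decrease it. If $\rho(x) = \beta+1$ she plays $U := F^\mu_\beta$ (legal since $x \in F^\mu_{\beta+1} = F(F^\mu_\beta)$); then whatever clopen $U'$ meeting $U$ is chosen by $\forall$, every point of $U = F^\mu_\beta$ has rank $\le \beta$, so $\exists$ picks $x' \in U \cap U'$ with $\rho(x') < \rho(x)$. Well-foundedness of the ordinals then forces the play to reach rank $1$, where $\exists$'s move $U = F^\mu_0 = \emptyset$ leaves $\forall$ with no legal response, so $\exists$ wins.

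The main obstacle is the limit case of this last argument, where $\rho(x) = \alpha$ is a limit ordinal, so $x$ is a boundary point introduced only by the closure operation, $x \in F^\mu_\alpha = \Cl\big(\bigcup_{\beta<\alpha} F^\mu_\beta\big)$, and there need be no clopen $U$ of strictly smaller rank with $x \in F(U)$. Here $\exists$ plays the whole clopen set $U := F^\mu_\alpha$ (legal since $x \in F^\mu_\alpha \subseteq F^\mu_{\alpha+1} = F(F^\mu_\alpha)$). The point is that $\forall$ can no longer isolate the bad rank-$\alpha$ points: his response must be a clopen, hence \emph{open}, set $U'$ meeting $F^\mu_\alpha = \Cl\big(\bigcup_{\beta<\alpha} F^\mu_\beta\big)$, and an open set meeting the closure of $\bigcup_{\beta<\alpha} F^\mu_\beta$ must already meet $\bigcup_{\beta<\alpha} F^\mu_\beta$ itself. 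Thus $\exists$ can choose $x' \in U' \cap \bigcup_{\beta<\alpha} F^\mu_\beta$, which has rank strictly below $\alpha$, and the well-foundedness argument goes through. This interplay---$\exists$ offering a clopen overapproximation of a limit stage and exploiting that $\forall$'s clopen challenge cannot avoid its open core---is exactly the feature that the three-position design of $\game^{II}_\mu(F)$ is built to capture.
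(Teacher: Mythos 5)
Your proof is correct and, on everything the paper's own proof covers, takes essentially the same route: the inclusion $\mu F \subseteq \Win_\exists(\game^{II}_\mu(F))$ via a rank-decreasing strategy whose limit case has $\exists$ play the clopen set $F^\mu_\alpha = \Cl(\bigcup_{\beta<\alpha}F^\mu_\beta)$ and exploits that $\forall$'s clopen (hence open) challenge $U'$ must already meet $\bigcup_{\beta<\alpha}F^\mu_\beta$; and the $\nu$-game handled by the ``play $\nu F$ forever'' strategy plus the ordinal induction showing $X \setminus F^\nu_\alpha \subseteq \Win_\forall(\game^{II}_\nu(F))$, with the same $\Int$-based contradiction at limit stages. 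The one point where you go beyond the paper is the inclusion $\Win_\exists(\game^{II}_\mu(F)) \subseteq \mu F$, which the paper's proof leaves implicit: your direct invariant argument---$\forall$ answers any legal $U$ with the nonempty clopen $U \cap (X\setminus \mu F)$, so the pebble never enters $\mu F$ and every compliant play is either infinite or ends with $\exists$ stuck---is a clean and arguably simpler way to supply that missing direction than a further ordinal induction would be.
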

\begin{proof}
	We first focus on the least fixpoint operator. Suppose that $x \in \mu F$ for some $x \in X$.
	Then there is a least ordinal $\alpha$ such that $x \in F^\mu_\alpha$, we  call this the $\mu$-depth of $x$. We will show that $\exists$ has a winning strategy in $\game^{II}_\mu(F)$ at $x$ by describing a strategy for $\exists$ that ensures that either $\forall$ gets stuck within the next round or that the play reaches a position $x' \in F^\mu_{\alpha'}$ with $\alpha' < \alpha$. Both facts entail that $\exists$ has a strategy such that all plays compliant with her strategy are finite and that $\forall$ is the player who will eventually get stuck.
	
%	\begin{description}
		\noindent
		{\bf Case} $\alpha = \beta + 1$. Then $x \in F^\mu_{\beta+1}=F(F^\mu_\beta)$ and $\exists$'s strategy is
		to move from $x$ to $(\forall,F^\mu_\beta)$. Player $\forall$ either gets stuck (if $F^\mu_\beta = \emptyset$) or responds by moving to some $(\exists,U')$ with  $U' \in \Clp(\bbX)$ such that $U' \cap F^\mu_\beta \not= \emptyset$. Now $\exists$ picks an arbitrary $x' \in U' \cap F^\mu_\beta$ and the round finished on a position 
		$x' \in F^\mu_\beta$ with strictly smaller $\mu$-depth as required.
		
		\noindent
		{\bf Case} $\alpha$ is a limit ordinal. Then $\exists$'s strategy is to move from $x$
		to $(\forall,\bigvee_{\beta < \alpha} F^\mu_\beta) = 
		(\forall,\Cl(\bigcup_{\beta < \alpha} F^\mu_\beta))$ which is a legal move as $x \in \bigvee_{\beta < \alpha} F^\mu_\beta \subseteq
		F(\bigvee_{\beta < \alpha} F^\mu_\beta)$. Unless $\forall$ gets stuck, he will move to some 
		position $(\exists,U')$ where $U' \in \Clp(\bbX)$ with $U' \cap \bigvee_{\beta < \alpha} F^\mu_\beta \not=
		\emptyset$. 
		In other words, the clopen subset $U'$ has a non empty intersection with the closure of
		$\bigcup_{\beta < \alpha} F^\mu_\beta$ which implies
%		Using a standard topological argument one can see that this implies
		$U' \cap \bigcup_{\beta < \alpha} F^\mu_\beta \not= \emptyset$. Therefore
		$\exists$ can pick a suitable element $x' \in \bigcup_{\beta < \alpha} F^\mu_\beta$ such that
		the round finishes in a position $x'$ of smaller $\mu$-depth. %\vspace{-0.8cm}
%	\end{description}

	We now show that
	the game $\game^\nu_{II}(F)$ characterises the greatest clopen fixpoint.
	Suppose that $x \in \nu F \in \Clp(\bbX)$. Then, as in the proof for the game
	$\game^{I}_\nu(F)$, $\exists$ has a simple winning strategy by always moving to
	$\nu F \in \Clp(\bbX)$.
	For the converse we show that for
	all ordinals $\alpha$ we have
	$X \setminus F_\alpha^\nu \subseteq  \Win_\forall(\game^{II}_\nu(F))$ by induction on $\alpha$.
	The cases $\alpha = 0$ and $\alpha = \beta +1$ follow easily from the inductive hypothesis.
	Suppose $\alpha$ is a limit ordinal and consider some
	$x \not\in F_\alpha^\nu = \bigwedge_{\beta<\alpha} F_\beta^\nu$
	and suppose that $\exists$ moves to some $U \in \Clp(\bbX)$ such that
	$x \in F(U)$. Then it is easy to see that
	$U \not\subseteq \bigcap_{\beta<\alpha} F_\beta^\nu$, for otherwise
	$U \subseteq \Int(\bigcap_{\beta<\alpha} F_\beta^\nu) = \bigwedge_{\beta<\alpha} F_\beta^\nu$
	and hence 
	$$x \in F(U) \subseteq F(\bigwedge_{\beta<\alpha} F_\beta^\nu) \subseteq \bigwedge_{\beta<\alpha} F_\beta^\nu.$$
	Therefore $\forall$ can pick some $x' \not\in \bigcap_{\beta<\alpha} F_\beta^\nu$, ie.,
	$x' \not\in F_\beta^\nu$ for some $\beta < \alpha$. By I.H. we know that
	$\forall$ has a winning strategy from position $x'$ and hence - as $\exists$'s move
	to $U$ was arbitrary - we showed that $\forall$ has a winning strategy from position $x$.
	This finishes the proof of $X \setminus F_\alpha^\nu \subseteq  \Win_\forall(\game^{II}_\nu(F))$
	which is equivalent to $F_\alpha^\nu \subseteq X \setminus \Win_\forall(\game^{II}_\nu(F)) = \Win_\exists(\game^{II}_\nu(F))$ for all $\alpha \in \Ord$. The latter implies
	$\Win_\exists(\game^{II}_\nu(F)) \subseteq \nu F$.
	%using the first half of the proposition we can conclude that 
	%$\Win_\exists(\game^{II}_\nu(F)) = \nu F$. \marginpar{possibly unclear that $\Win_\exists(\game^{II}_\nu(F)) \subseteq X$ - need to clarify}
	\end{proof}
	We conclude our discussion of fixpoint games on extremally disconnected spaces. The reader might wonder why we introduced two
	games $\game^I_\mu(F)$, $\game^{II}_\mu(F)$ for the least fixpoint of $F$ and two games for
	the greatest fixpoint. Do we really need both variants of the $\mu$- and $\nu$-games?
	The reason why %we strongly believe that 
	both variants seem necessary for proving our adequacy theorem is based on the
	following observation\footnote{We state this observation for $\mu$, but it equally applies to $\nu$.}: 
	The games $\game^I_\mu$ and $\game^{II}_\mu$ characterise both the same least fixpoints and have therefore
	the same winning regions within the set of states $X$. It is, however, in general not possible to transform   strategies
	of $\exists$ in the first variant of the $\mu$-game into corresponding strategies for $\exists$ in the second game. To see this, suppose that $\exists$ has a strategy $f$ in $\G^I = \G^I_\mu(F)$ at position
	$x$ and suppose $f(x) = C$. We would like to equip $\exists$ with a corresponding strategy $g$ in $\G^{II}=\G^{II}_\mu(F)$ at position $x$ such that
	for the next ``round'' $x U U' y$ of $\G^{II}$ that is compliant with $g$, there is a corresponding
	round  $x C y$  of $\G^{I}$  compliant with $f$ (and by re-using that argument round-by-round, one could ensure that
	$f$ is a winning strategy for $\exists$ in $\G^I$ iff $g$ is
	a winning stratgey for $\exists$ in $\G^{II}$).
	
%	all resulting $\G^{II}$-plays conform $g$ correspond to $f$-conform $\G_I$-plays that are of 
%	equal length. %and that pass through the same positions in $X$. 
%	At position $x$ it is $\exists$'s strategy is to move to some $C \subseteq X$ such that
%	$x \in F(U)$ for all clopen sets $U$ with $C \subseteq U$ and $\forall$ picks then
%	an element $x' \in C$. 
	To achieve this, we have to define $\exists$'s strategy $g$ such that she moves 
	from $x$ in $\G^{II}$ to some suitable clopen set $U$.
	Suppose $U \subseteq \Cl(C)$. Then $\forall$ can respond with some $U' \in \Clp(\bbX)$ such that
	$U \cap U' \not= \emptyset$. This implies $U' \cap \Cl(C) \not= \emptyset$ and thus - as $U'$ is  clopen - 
	that $U' \cap C \not= \emptyset$. Hence, $\exists$ can continue the play by picking an element 
	$y$ of $U' \cap C$ which overall results in the partial $\G^{II}$-play $x U U' y$. Clearly, the sequence
	$x C y$ is also an $f$-compliant $\G^I$-play and therefore can act as the
	corresponding play for the $\G^{II}$-play $x U U' y$. Similarly one can show that in any play where
	$\exists$ moves from position $x$ to some $U$ with $U \not\subseteq \Cl(C)$, $\forall$ can ensure that
	the next state $y$ that is reached in the play will be an element of $X \setminus C$ and therefore 
	that the resulting $\G^{II}$-play is no longer linked to any corresponding $f$-compliant $G^I$-play.
	
	Therefore we can construct a corresponding strategy for $\exists$ in $\G^{II}$ iff there is a legitimate
	move $U$ for $\exists$ at $x$ with $U \subseteq \Cl(C)$.
	%There is, however, a problem
	%with the construction of strategy $g$ 
	%for $\exists$ in $\G^{II}$ that we outlined in the previous paragraph: 
	In
	general, however, there is no 
	suitable clopen set $U \subseteq \Cl(C)$ with
	$x \in F(U)$ -  and this property is required for a legitimate move in $\G^{II}$.
	This is demonstrated by the following example.
	
	\begin{example} {\em 
		Consider the Stone-\u{C}ech compactification $\beta(\bbN)$  of the natural numbers\footnote{Which is extremally disconnected, see eg~\cite{Sik60}.},
		%any space of the form $\beta X$ is 
		%extremally disconnected.}, 
		let 
		$C \subseteq \beta(\bbN)$
		be the collection of non-principal ultrafilters over $\bbN$ and consider the (trivially monotone) operator 
		$$F =\id_{\Clp(\beta(\bbN))}: \Clp(\beta(\bbN)) \to \Clp(\beta(\bbN)) .$$
	 	For any clopen $U \in \Clp(\beta (\bbN))$ we have $U = \hat{S} = \{ u \in \beta(\bbN) \mid S \in u\}$
		for some suitable set $S \subseteq \bbN$. With this in mind, it is easy to see that 
		for all clopens $U$ we have $U \subseteq C$ implies $U = \emptyset$. 
		%$C \subset \hat{S}$ 
		%iff $S$ is a co-finite subset of $\bbN$ and $\Cl(C) = C$. 
		
		 Consider now an arbitrary $x \in C$. 
		We have that 
		%$U = \hat{S}$ for some
		%co-finite $S \subseteq \bbN$ and, consequently, $
		$x \in F(U)$
		for all $U \in \Clp(\beta\bbN)$ such that $C \subseteq U$  (in particular,
		$C$ would be a legitimate move in $\G^I_\mu(F)$ at $x$). 
		On the other hand, for $U \in \Clp(\beta\bbN)$ 		we have that
		$U \subseteq \Cl(C) = C$ implies $U = \emptyset$ and thus $x \not\in F(U)$ for all these $U$ (which shows that there is no suitable move for $\exists$ in $\G^{II}_\mu(F)$ at $x$ that correponds to her move from
		$x$ to $C$). }
	\end{example}
%	To sum it all up: while versions $I$ and $II$ of the fixpoint games both characterise the 
%	same least and greatest fixpoints, it is in general not possible to directly  transform winning strategies in the game in version $I$ into winning strategies in the game in version $II$.
	
%	Similarly, it seems not feasible, to 
%	establish a direct correspondence between
%	winning strategies of $\exists$ in the evaluation game and winning strategies of only one of the games $\game^I_\mu$ or $\game^{II}_\mu$.
%	\marginpar{discussion of game variants with "no substantial argument" - at least provide detailed proof of Prop. 3.2 and 3.4}
%\subsection{Compact Hausdorff spaces}
%Similarly to what we did for extremally disconnected spaces, we also will define games that characterise regular open fixpoints
%for monotone functions on the complete lattes $\RO(\bbX)$ of regular open subsets of a given compact Hausdorff space $\bbX$.
%The compactness of the space does not really play a role in the game - we are working with compact Hausdorff spaces only because
%this will allow us to obtain a reasonable interpretation of the Boolean operators of the $\mu$-calculus.
%
%\begin{definition}
%
%
%\end{definition}
%
%The alternative versions of the game look as follows:
%
%\begin{definition}
%
%\end{definition}

\section{Game semantics for the $\mu$-calculus on topological spaces}

We are now ready to define the game characterisation of the clopen semantics of the modal $\mu$-calculus.
Our presentation follows the presentation of the standard game semantics of the modal $\mu$-calculus that can be found e.g.~in~\cite{vene:mu}.
In the following we assume that we are dealing with {\em ``clean''} formulas in $\Lang_\mu$:
		\begin{definition}
			A formula $\varphi \in \Lang_\mu$ is called clean if no two distinct occurrences of fixpoint operators in $\varphi$
		bind the same propositional variable and if a variable occurs either free or bound in $\varphi$ (but not both bound and free).
		For any bound variable $p \in \Prop$ that occurs within a clean formula  $\varphi$  we denote by
		$\varphi@p = \eta p. \psi$ the unique subformula of $\varphi$  where $p$ is bound by the fixpoint operator $\eta \in \{ \mu ,\nu \}$.
		\end{definition}
The restriction to clean formulas is standard practice in the modal literature. It will simplify the game definition. Furthermore it allows us
		to give a concise definition of when the unfolding of one fixpoint variable depends on the unfolding of another one.		
		\begin{definition}
			For a clean formula $\varphi \in \Lang_\mu$ and bound variables $x,y \in \Prop$ occurring in $\varphi$ we say
			$x \leq_\varphi y$ if $\varphi@x$ is a subformula of $\varphi@y$.
		\end{definition}
%\subsection{Extremally disconnected spaces: clopen semantics}

		\begin{definition}
			Let $\varphi \in \Lang_\mu$ be a formula and let $\M=(\bbX,R,V)$ be an extremally disconnected Kripke model 
			%based 	
			%on a space $\bbX=(X,\tau)$ 
			together with valuation
			$V: \Prop \to \Clp(\bbX)$. The game board of
			the evaluation game $\Ev (\varphi,\M)$ is specified in the table in Figure~\ref{tab:evgame}.
			
			As usually a finite full play of $\Ev (\varphi,\M)$ is lost by the player who got stuck at the end of the play.
			In order to specify the winning condition on infinite plays $\pi$ we need the following notation:
			\[ \Inf(\pi) \mathrel{:=} \{p \in \BV(\varphi) \mid p \mbox{ occurs infinitely often in } \pi \} .\]
			A standard argument shows that for any infinite play $\pi$ of  $\Ev (\varphi,\M)$ the set
			$\Inf(\pi)$ is nonempty, finite and upwards directed with respect to the dependency order $\leq_\varphi$.
			Therefore the maximal element $\max(\Inf(\pi))$ of $\Inf(\pi)$ wrt $\leq_\varphi$ is well-defined
			and we declare $\exists$ to be the winner of an infinite play $\pi$ of $\Ev(\varphi,\M)$ iff
			$\max(\Inf(\pi))$ is a $\nu$-variable, ie., a variable bound by a greatest fixpoint operator.  
		\end{definition}
		After our discussion of fixpoint games, the reader should have little problems with understanding the intuition behind the winning condition:
		an infinite play during which the highest infinitely often ``unfolded'' fixpoint variable is a $\nu$-variable
		corresponds to an infinite play of a greatest fixpoint game. 
		Therefore $\eloi$ wins such a play.  Similarly all infinite plays in which the highest infinitely
		often unfolded variable is a $\mu$-variable are won by $\forall$.		
		\begin{figure} {\small
	\begin{center} 	\begin{tabular}{r|c|l}
		Position & Player & Possible Moves \\
		\hline \hline 
		   $(p,x)$, $p \in \FV(\varphi)$ and $x \not\in V(p)$ & $\exists$ & $\emptyset$ \\
		   $(p,x)$, $p \in \FV(\varphi)$ and $x \in V(p)$ & $\forall$ & $\emptyset$ \\
		   $(\neg p,x)$, $p \in \FV(\varphi)$ and $x \not\in V(p)$ & $\forall$ & $\emptyset$ \\
		   $(\neg p,x)$, $p \in \FV(\varphi)$ and $x \in V(p)$ & $\exists$ & $\emptyset$ \\
		   $(\psi_1 \wedge \psi_2,x)$ & $\forall$ & $\{ (\psi_1,x) , (\psi_2,x) \}$ \\
		    $(\psi_1 \vee \psi_2,x)$ & $\exists$ & $\{ (\psi_1,x) , (\psi_2,x) \}$ \\
		     $(\Diamond \psi,x)$ & $\exists$ & $\{ (\psi,x') \mid Rxx' \}$ \\
		       $(\Box \psi,x)$ & $\forall$ & $\{ (\psi,x') \mid Rxx' \}$ \\
		       $(\eta p. \psi,x)$, $\eta \in \{\mu,\nu\}$ & $\exists/\forall$ & $(\psi,x)$ \\
		    $(p,x)$, $p \in \BV(\varphi)$, $\varphi@p = \mu p.\psi$ & $\forall$ & $\{ (p,U) \mid U \in \Clp(\bbX), \; x \in U \}$\\
		     $(p,x)$, $p \in \BV(\varphi)$, $\varphi@p = \nu p.\psi$ & $\exists$ & $\{ (p,U) \mid U \in \Clp(\bbX), \; x \in U \}$\\
		    $(p,U)$, $p \in \BV(\varphi)$, $\varphi@p = \mu p.\psi$& $\exists$ & $\{ (\psi,x') \mid x' \in U\}$ \\
		    $(p,U)$, $p \in \BV(\varphi)$, $\varphi@p = \nu p.\psi$& $\forall$ & $\{ (\psi,x') \mid x' \in U\}$ \\
	\end{tabular}\\ \vspace{2mm}
	where $x,x'$ denote elements of $X$ and $U$ denotes a clopen subset of $\bbX=(X,\tau)$.
		\end{center}
		\caption{Game board of the evaluation game $\Ev(\varphi,\M)$}
		\label{tab:evgame}}
		\end{figure}
	We now turn to the formulation and proof of the main theorem of this section.
	First we need to introduce some terminology and an auxiliary lemma.
	\begin{definition}
		Consider a two-player graph game $\G$ with set of positions $B$.
		For a set
$Y \subseteq B$ we say a $\G$-play $\pi$ is $Y$-full if either $\pi$ is a full play or
$\pi = b_0 \dots b_n$ is a partial play with $b_0, \dots, b_{n-1} \not\in Y$ and $b_n \in Y$, i.e.,
$b_n$ is the first position of the play occurring in $Y$.
	\end{definition}
	\begin{lemma}\label{lem:GvsGU}
		Let $\M=(\bbX,R,V)$ be an extremally disconnected model, let $\varphi = \eta p.\delta$ with $\eta \in \{\mu,\nu\}$ be a fixpoint formula and consider the games 
		$\G_\eta = \Ev(\eta p.\delta,\M)$ and $\G_U=\Ev(\delta,\M[p\mapsto U])$ with $U\in \Clp(\bbX)$.
		Furthermore we let $\unf_p = \{ (p,x') \mid x' \in X \}$.
		\begin{enumerate}[(i)]
			\item Any strategy $f_\eta$ for $\exists$ in $\G_\eta$ at $(\delta,x)$ 
			corresponds to a strategy $f_U$ for $\exists$ in $\G_U$ at $(\delta,x)$ 
			such that any $\unf_p$-full, $f_\eta$-compliant $\G_\eta$-play starting at $(\delta,x)$ %$\pi = (\delta,x) b_1 \dots b_j \dots$
			%the sequence $\pi' = (\delta,x) b_1 \dots b_j \dots$ (i.e., $\pi'$ is equal to $\pi$ minus the first position) 
			is an $f_U$-compliant, full $\G_U$-play.
%			$\pi' = (\delta,x) b_1 \dots b_j \dots$ (i.e., $\pi'$ is equal to $\pi$ minus the first position).
%			\item Any strategy $f_\eta$ for $\exists$ in $\G_\eta$ at position
%			$(\eta p.\delta,x)$ can be turned into a strategy $f_U$ for $\exists$
%			in $\G_U$ at position $(\delta,x)$
%			such that any $f_\eta$-conform $\G_\eta$-play $\pi = (\eta p.\delta,x)(\delta,x) b_1 \dots b_j \dots$ 
%			that does not contain a second position of the form
%			$(\delta,x')$ with $x' \in X$, contains the $f_U$-conform $\G_U$-play
%			$\pi' = (\delta,x) b_1 \dots b_j \dots$ 
%			(i.e., $\pi'$ is equal to $\pi$ minus the first position).
			\item Any strategy $f_U$ of $\exists$ in $\G_U$ at $(\delta,x)$
			corresponds to a strategy $f_\eta$ for $\exists$ in $\G_\eta$
			at $(\delta,x)$ such that for any full $f_U$-compliant $\G_U$-play starting at $(\delta,x)$
%			$\pi=(\delta,x) b_1 \dots b_i \dots$ 
%			the sequence $\pi' = (\eta p.\delta,x) \pi$ 
			is an $f_\eta$-compliant, $\unf_p$-full $\G_\eta$-play.
		%	Moreover, if $\pi$ does not end at a position of the form $(p,x')$ with $x' \in X$
		%	the $\pi$ is a full $\G_\eta$-play if $\pi'$ is a full $\G_U$-play.
		\end{enumerate}
	\end{lemma}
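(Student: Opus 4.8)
The plan is to exploit the fact that the two evaluation games differ only at the positions in $\unf_p$. First I would record the key structural observation: since $\varphi=\eta p.\delta$ is \emph{clean}, $p$ is the only variable whose binding status differs between $\G_\eta=\Ev(\eta p.\delta,\M)$ and $\G_U=\Ev(\delta,\M[p\mapsto U])$ --- it is bound in the former and free, with value $U$, in the latter --- while every other free or bound variable of $\delta$ retains its status. Moreover $V$ and $V[p\mapsto U]$ agree on all propositions other than $p$, and the moves at the Boolean, modal and (other) fixpoint positions do not depend on the valuation of $p$. Hence the two game boards coincide, both on positions and on admissible moves, at every position $(\psi,x')$ where $\psi$ is a subformula of $\delta$ different from the atom $p$. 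The single discrepancy is at a position $(p,x')\in\unf_p$: in $\G_\eta$ it is an internal position (owned by $\forall$ if $\eta=\mu$, by $\exists$ if $\eta=\nu$) from which the play unfolds $p$, whereas in $\G_U$ it is terminal --- it has no successors, $p$ being free --- so the play ends there.

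For direction (i), given a strategy $f_\eta$ for $\exists$ in $\G_\eta$ I would take $f_U$ to be the restriction of $f_\eta$ to the shared positions: for a partial $\G_U$-play $\beta$ ending at an $\exists$-position $(\psi,x')$ with $\psi\neq p$, this $\beta$ is verbatim a partial $\G_\eta$-play, and I set $f_U(\beta)\coloneqq f_\eta(\beta)$. This is well-defined and legal, because the admissible moves coincide on shared positions and every non-terminal $\exists$-position of $\G_U$ is shared with $\G_\eta$. Now take any $\unf_p$-full, $f_\eta$-compliant $\G_\eta$-play $\pi$ from $(\delta,x)$. By the structural observation $\pi$ runs through shared positions until it either (a) is a full play never meeting $\unf_p$, or (b) is stopped the first time it reaches some $(p,x')\in\unf_p$. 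In case (a) the same sequence is a full $\G_U$-play; in case (b) the position $(p,x')$ is terminal in $\G_U$, so the sequence is again a full $\G_U$-play. In both cases $f_U$-compliance is immediate since $f_U$ agrees with $f_\eta$ on the shared $\exists$-positions traversed by $\pi$.

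Direction (ii) is symmetric: given $f_U$ I would define $f_\eta$ to copy $f_U$ on the shared positions, playing arbitrarily (e.g.\ always to $(p,X)$) once a position of $\unf_p$ has been passed --- a legal choice since $x'\in X\in\Clp(\bbX)$ always. Then any full, $f_U$-compliant $\G_U$-play from $(\delta,x)$ either avoids $\unf_p$ and is full (finite or infinite), in which case it is literally an $f_\eta$-compliant full, hence $\unf_p$-full, $\G_\eta$-play, or it ends at the first $(p,x')\in\unf_p$, in which case the same sequence is an $f_\eta$-compliant, $\unf_p$-full $\G_\eta$-play by construction. Because both assignments are the identity on shared positions, the two directions are manifestly compatible, which is all that ``corresponds'' requires.

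The routine part is the move-by-move verification that the two boards agree off $\unf_p$, which is where cleanness and the agreement of $V$ and $V[p\mapsto U]$ away from $p$ are used. I expect the only genuine subtlety --- and the main, though minor, obstacle --- to be the change of ownership and of ``stuck/terminal'' status at the $\unf_p$-positions: the owner of $(p,x')$ in $\G_\eta$ depends on $\eta$, while in $\G_U$ it depends on whether $x'\in U$. This mismatch is exactly neutralised by the $\unf_p$-full condition, which halts every play under consideration the first time such a position is reached, so it never matters who owns that position or whether it has successors.
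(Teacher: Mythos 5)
Your proposal is correct and follows essentially the same route as the paper: the paper's (one-line) proof rests on exactly your key observation, namely that a sequence starting at $(\delta,x)$ is an $\unf_p$-full $\G_\eta$-play iff it is a full $\G_U$-play, since the two boards coincide off $\unf_p$ and positions in $\unf_p$ are terminal in $\G_U$. Your write-up merely makes explicit the strategy restriction/extension and the cleanness bookkeeping that the paper leaves implicit.
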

	\begin{proof}
		The lemma follows from the fact that a sequence of the form 
		$\pi = (\delta,x) b_1 \dots b_j \dots$
		is an $\unf_p$-full $\G_\eta$-play iff it 
		%``tail'', i.e., the sequence 
	%	$\pi' = (\delta,x) b_1 \dots b_j \dots$,
		is a full $\G_U$-play.
%		The first observation to be made is that the games  
%		 $\game = \Ev(\varphi,\M)$ and $\game_U = \Ev(\varphi,\M[p \mapsto U])$ are very similar. More precisely, these games
%		 only differ in positions of the form $(p,x)$ at which in $\game$ the fixpoint gets unfolded (as $p$ is treated as a bound fixpoint variable)
%		 whereas in $\game_U$ a play ends at position $(p,x)$ and the winner is declared depending on whether or not
%		 $x \in U$.
%		 
%		To be provided.
	\end{proof}
	\begin{theorem}[Adequacy]
		Let $\M=(\bbX,R,V)$ be an extremally disconnected model with
		valuation $V: \Prop \to \Clp(\bbX)$. 
		For every formula $\varphi \in \Lang_\mu$ and every $x \in X$ the following are equivalent:
		\begin{enumerate}[(i)]
			\item $x \in \lsem \varphi \rsem_V$, and
			\item $\exists$ has a winning strategy at position $(\varphi,x)$ in
				$\Ev(\varphi,\M)$.
		\end{enumerate}
	\end{theorem}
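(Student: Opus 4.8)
The plan is to prove both implications simultaneously by induction on the structure of $\varphi$, quantifying the induction hypothesis universally over the valuation $V$ (so that it may be applied to altered valuations $V[p\mapsto U]$). The cases of literals, $\bot$, $\top$, and the Boolean and modal connectives are entirely routine: each is resolved by reading off the relevant row of the table in Figure~\ref{tab:evgame} and applying the induction hypothesis to the immediate subformulae. All the work is in the fixpoint case $\varphi = \eta p.\delta$ with $\eta \in \{\mu,\nu\}$. Writing $F \coloneqq \operator{p}{V}{\delta}$, we have $\lsem \eta p.\delta\rsem_V = \eta F$ (that is, $\mu F$ or $\nu F$), and since the move out of $(\eta p.\delta, x)$ is forced it suffices to show that $\exists$ wins $\G_\eta \coloneqq \Ev(\eta p.\delta,\M)$ from $(\delta,x)$ iff $x \in \eta F$. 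By Propositions~\ref{prop:first_fp_game} and~\ref{prop:second_fp_game} the latter is equivalent both to $x \in \Win_\exists(\game^I_\eta(F))$ and to $x \in \Win_\exists(\game^{II}_\eta(F))$, so it is enough to bridge the evaluation game with these two fixpoint games. The induction hypothesis supplies the crucial dictionary: applied to $\delta$ in the model $\M[p\mapsto U]$ it gives, for every clopen $U$, that $x' \in F(U)$ iff $\exists$ wins $\G_U \coloneqq \Ev(\delta,\M[p\mapsto U])$ from $(\delta,x')$; and Lemma~\ref{lem:GvsGU} lets us pass between strategies in $\G_U$ and strategies in $\G_\eta$ by matching full $\G_U$-plays with $\unf_p$-full $\G_\eta$-plays.

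For the implication $x \in \eta F \Rightarrow \exists$ wins $\G_\eta$ I would use the second fixpoint game. Fixing a winning strategy $g$ for $\exists$ in $\game^{II}_\eta(F)$, I let her play $\G_\eta$ round-by-round between consecutive unfoldings of $p$. At a position $(\delta,x')$ she reads off from $g$ a clopen $U$ with $x' \in F(U)$ and, using the induction hypothesis together with Lemma~\ref{lem:GvsGU}(ii), plays the body according to a winning $\G_U$-strategy $f_U$ until the play first returns to an unfolding position $(p,z)$; since $f_U$ is winning this forces $z \in U$. At the unfolding she defers to $g$: for $\eta=\mu$, $\forall$'s choice of a clopen $U''\ni z$ in $\G_\mu$ is a legal $\forall$-move $(\forall,U)\mapsto(\exists,U'')$ in $\game^{II}_\mu(F)$ (legal because $z\in U\cap U''$), after which $\exists$ picks the point $g$ prescribes; for $\eta=\nu$ she simply plays $U''\coloneqq U$ and $\forall$'s ensuing point is the $\forall$-move in $\game^{II}_\nu(F)$. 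Each super-round of $\G_\eta$ thus projects to a $g$-compliant round of $\game^{II}_\eta(F)$, so the induced $\G_\eta$-play is won by $\exists$: finitely, terminating inside a body when $g$ would leave $\forall$ stuck, or (for $\nu$) infinitely, in which case $p$ is unfolded infinitely often.

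For the converse $\exists$ wins $\G_\eta \Rightarrow x \in \eta F$ I would use the first fixpoint game, extracting a winning strategy $\hat h$ for $\exists$ in $\game^I_\eta(F)$ from a winning $\G_\eta$-strategy $h$. Maintaining a correspondence between $\game^I_\eta$-plays and $h$-compliant $\G_\eta$-plays, at a state $x'$ matched to $(\delta,x')$ I have $\exists$ play the set $C_{x'}$ of all states $y$ at which some $h$-compliant super-round from $(\delta,x')$ can end just after an unfolding of $p$. The heart of the matter is that $C_{x'}$ is a legal move, for which one checks that every exit point $z$ lies in the appropriate clopen hull of $C_{x'}$: for $\eta=\mu$, $\forall$ may offer any clopen $U''\ni z$ and $h$ answers with a point of $U''\cap C_{x'}$, so every clopen neighbourhood of $z$ meets $C_{x'}$ and hence $z\in\Cl(C_{x'})$; for $\eta=\nu$, $\exists$ herself offers a clopen $U''\ni z$ all of whose points are reachable, so $U''\subseteq C_{x'}$ and $z\in\Int(C_{x'})$. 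Either way the induction hypothesis and Lemma~\ref{lem:GvsGU}(i) yield $x'\in F(U)$ for every clopen $U$ containing $C_{x'}$, respectively $\Int(C_{x'})$, which is exactly the legality condition in $\game^I_\mu(F)$, respectively $\game^I_\nu(F)$. A move of $\forall$ to $x''\in C_{x'}$ is then realised by the corresponding $h$-compliant $\G_\eta$-play reaching $(\delta,x'')$, and the simulation continues; note $\exists$ is never stuck, since $C_{x'}$ (possibly empty) is always available.

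The main obstacle is precisely this legality step in the converse direction: it is where the topology enters and where the deliberate asymmetry between the two fixpoint games pays off. That clopen sets form a neighbourhood base in a Stone space is exactly what places the exit points into $\Cl(C_{x'})$ for $\mu$ and into $\Int(C_{x'})$ for $\nu$, matching the respective legality conditions of $\game^I_\mu(F)$ and $\game^I_\nu(F)$, and it is also what makes $\game^{II}$ rather than $\game^I$ the right game for the forward direction, just as discussed after Proposition~\ref{prop:second_fp_game}. The remaining bookkeeping concerns infinite plays: a simulated infinite play corresponds to an $h$-compliant $\G_\eta$-play that unfolds $p$ infinitely often, and since $p$ dominates every bound variable of $\delta$ in the dependency order $\leq_\varphi$ we get $\max(\Inf(\pi))=p$. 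For $\eta=\mu$ this would be a $\mu$-variable won by $\forall$, contradicting that $h$ is winning, so no infinite play of $\game^I_\mu(F)$ arises and $\exists$ wins all (finite) plays; for $\eta=\nu$ such infinite plays are admissible and won by $\exists$, matching the winning condition of $\game^I_\nu(F)$. This completes the fixpoint case and hence the induction.
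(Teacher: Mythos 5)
Your proposal is correct and follows essentially the same route as the paper's proof: the forward direction transfers a winning strategy of $\game^{II}_\eta(\operator{p}{V}{\delta})$ into the evaluation game via Lemma~\ref{lem:GvsGU}, and the converse extracts a strategy for $\game^{I}_\eta(\operator{p}{V}{\delta})$ by playing the set of states reachable just after an unfolding of $p$ (the paper's $C(x)$), with legality secured by the clopen-neighbourhood argument. The only differences are presentational: you treat $\nu$ explicitly where the paper says ``similar,'' and you prove legality directly (exit points lie in $\Cl(C_{x'})$, resp.\ $\Int(C_{x'})$) rather than by the paper's contradiction with a move to $X \setminus U'$ — the same underlying fact in contrapositive form.
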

	\begin{proof}[(Sketch)]
		The proof goes by induction on $\varphi$. We only will sketch the induction
		step for the case that $\varphi = \mu p .\delta$ - the full proof of the theorem
		is quite lengthy and most of the details are similar to the adequacy proof of the standard game
		semantics  for
		the modal $\mu$-calculus. We put $\G = \Ev(\varphi,\M)$ and for any clopen
		subset  $U \in \Clp(\bbX)$ we put $\game_U = \Ev(\delta,\M[p \mapsto U])$.	
		
		By the induction hypothesis on $\delta$ and because $\lsem \delta \rsem_{V[p \mapsto U]}
		=\operator{p}{V}{\delta}(U)$ we have for all $U \in \Clp(\bbX)$ that
		\begin{equation}\label{equ:IH_adequ}
		 x \in \operator{p}{V}{\delta}(U) \mbox{ iff } (\delta,x) \in \Win_\exists(\G_U) .
		 \end{equation}
%		 The first observation to be made is that the games  
%		 $\game = \Ev(\varphi,\M)$ and $\game_U = \Ev(\varphi,\M[p \mapsto U])$ are very similar. More precisely, these games
%		 only differ in positions of the form $(p,x)$ at which in $\game$ the fixpoint gets unfolded (as $p$ is treated as a bound fixpoint variable)
%		 whereas in $\game_U$ a play ends at position $(p,x)$ and the winner is declared depending on whether or not
%		 $x \in U$. \marginpar{Referee: discussion of $\game$ vs $\game_U$ too informal - add lemmas!}
		 
		 In order to prove the theorem for $\varphi = \mu p. \delta$ it suffices to show that
		 the following are equivalent:
		 \begin{eqnarray}
		 	x   &\in&  \Win_\exists \left(\game^{I}_\mu(\operator{p}{V}{\delta})\right)   \label{winone} \\
			x  & \in&  \Win_\exists \left(\game^{II}_\mu(\operator{p}{V}{\delta})\right)  \label{wintwo} \\
			(\varphi,x)   &\in&  \Win_\exists (\game).	\label{win}
		 \end{eqnarray}
		 We proved the equivalence of (\ref{winone}) and (\ref{wintwo}) in the previous section.
		 To prove all of the equivalences,  we will now 
		 show that (\ref{wintwo}) implies (\ref{win}) which in turn implies (\ref{winone}). 
		 For the implication from (\ref{wintwo}) to (\ref{win})
		 consider some state $x \in \Win_\exists (\game^{II}_\mu(\operator{p}{V}{\delta}))$, ie., $\exists$ has
		 a history-free winning strategy at position $x$ in $\game^{II}_\mu(\operator{p}{V}{\delta}))$ 
		 represented by two (possibly partial)
		 functions 
		 $$U: X \to \Clp(\bbX) \qquad \mbox{ and } \qquad N: \Clp(\bbX) \to X.$$ 
		 W.l.o.g.~we can
		 assume that $\langle U,N \rangle$ is winning for $\exists$ from all positions in $\Win_\exists (\game^{II}_\mu(\operator{p}{V}{\delta}))$ (in particular, $U$ and $N$ are defined at those positions).
		 As the strategy $U$ is winning (and thus legitimate) at all 
		 $x \in \Win_\exists (\game^{II}_\mu(\operator{p}{V}{\delta}))$ 
		 we have that 
		 for all such $x$ that $U(x)$ is a legitimate move at $x$.
		 Hence $x \in \operator{p}{V}{\delta}(U(x))$ and thus, by~(\ref{equ:IH_adequ}),
		 $(\delta,x) \in \Win_\exists(\G_{U(x)})$.		 
		 Therefore,
		 for each $x \in \Win_\exists (\game^{II}_\mu(\operator{p}{V}{\delta}))$,
		 we can assume 
		 \begin{enumerate}[(a)]
		  \item that there is a winning strategy $f_{U(x)}$ for $\exists$ in the game $\game_{U(x)}$ at position $(\delta,x)$ and
		  \item that $(\forall,U(x)) \in \Win_\exists (\game^{II}_\mu(\operator{p}{V}{\delta}))$.
		 \end{enumerate}
%		 We now argue that this means that there exists a winning strategy $f_s$ for $\exists$ in the game
%		 $\game_{\cl{T(s)}} = \Ev(\delta,\bbS[x \mapsto \cl{T(s)}])$ (this is a slight abuse of notation, because in general
%		 $\cl{T(s)}$ will only be a closed and not a clopen set, but it's still clear how that game is defined: in case we reach a position
%		 of the form $(x,t)$ we check whether or not $t \in \cl{T(s)}$ in order to see who won the play). Suppose for a contradiction
%		 that such a strategy $f_s$ of $\exists$ does not exist. Then $\forall$ has a winning strategy in $\game_{\cl{T(s)}}$
%		 at position $s$. 

		 %Because the games $\G$ and $\G_{U(x)}$ are so similar, 
		 As seen in Lemma~\ref{lem:GvsGU}, the winning strategy $f_{U(x)}$ can be (trivially) turned
		 into a valid strategy $f_{\mu,x}$ for $\exists$  in $\G$ at $(\delta,x)$ that can be followed 
		 until another position of the form $(p,x')$ is reached or until $\exists$ wins the game. 
		 This observation is important for defining
		 $\exists$'s strategy in $\game$ starting from position $(\varphi,x)$:
		 \begin{itemize}
		 	\item starting from $(\varphi,x)$, the play proceeds to $(\delta,x)$ and after that
			$\exists$ plays strategy $f_{\mu,x}$.
		 	%\item after the initial move from $(\varphi,x)$ to $(\delta,x)$ (there is no choice here) $\exists$
			%plays her strategy $f_x^{U(x)}$ 
			\item if the $f_{\mu,x}$-compliant play never reaches a position of the form 
				$(p,x')$ then $\exists$ continues playing according to $f_{\mu,x}$ and
				wins: the resulting $f_{\mu,x}$-compliant, full $\G$-play
				contains a $f_{U(x)}$-compliant full $\G_U$-play (by Lemma~\ref{lem:GvsGU}) starting at $(\delta,x)$
				which is won by $\exists$
				as $f_{U(x)}$ is a winning strategy for $\exists$ in $\G_{U(x)}$ at $(\delta,x)$.
			\item Suppose an $f_{\mu,x}$-compliant play reaches a position of the form 
				$(p,x')$. Until now - by Lemma~\ref{lem:GvsGU} - the play corresponds to
				a $f_{U(x)}$-compliant play of $\G_{U(x)}$. As $f_{U(x)}$ is a winning strategy for 
				$\exists$ in $\G_{U(x)}$ this entails that
				$x' \in U(x)$.  It is now  $\forall$'s turn
				to move in $\G$ to a position $(p,U')$ with $x' \in U'$.
				
				As $x' \in U(x) \cap U'$ (by the definition of $\G$), we have $U(x) \cap U' 
				\not= \emptyset$, i.e., the move to $(\exists,U')$
				is a legal move for $\forall$ in $\game^{II}_\mu(\operator{p}{V}{\delta})$
				at position $(\forall,U(x))$. As the latter is an element of
				$\Win_\exists (\game^{II}_\mu(\operator{p}{V}{\delta}))$, we also
				have that $(\exists, U') \in \Win_\exists (\game^{II}_\mu(\operator{p}{V}{\delta}))$.
				Hence $\exists$'s winning strategy $N$ in $\Win_\exists (\game^{II}_\mu(\operator{p}{V}{\delta}))$ specifies a well-defined, legitimate move at $U'$ that follows $\exists$'s winning strategy in $\Win_\exists (\game^{II}_\mu(\operator{p}{V}{\delta}))$.
				
				Therefore, in $\G$, $\exists$ answers $\forall$'s move to $(p,U')$ by moving to 
				$(\delta,y)$ with $y = N(U')$ and continues from there according to
				strategy $f_{\mu,y}$.
				%following her strategy
				%$N$ from the fixpoint game. 
		 \end{itemize}
		 It is not difficult to check, that
		 this describes indeed a winning strategy for $\exists$ in $\game$ from
		 position $(\varphi,x)$. The key observation is that for
		 any $\G$-play of the form \vspace{-.2cm}
		 $$ \pi = (\varphi,x) \dots (p,x_1)(\delta,U_1)(\delta,y_1) \dots (p,x_2)(\delta,U_2)(\delta,y_2)
		 \dots (p,x_i)(\delta,U_i)(\delta,y_i) \dots \vspace{-.2cm}$$		 
		 there is a corresponding infinite
		 play of $\game^{II}_\mu(\operator{p}{V}{\delta})$ of the form \vspace{-.2cm}
		 $$ \pi' = x \; (\forall,U(x))\; (\exists,U_1) \; y_1 \; (\forall,U(y_1)) \; (\exists,U_2) \; y_2 \dots (\forall,U(y_{i-1})) \; (\exists,U_i) \; y_i \dots  \vspace{-.2cm}$$
		 which is compliant with $\exists$'s winning strategy in $\game^{II}_\mu(\operator{p}{V}{\delta})$ 		 	 and where the number of fixpoint unfoldings in $\pi'$ is equal to the number of occurrences
		 of positions of the form $(p,x')$ in $\pi$. As $\pi'$ is won by $\exists$, the play $\pi'$ must end after finitely many moves. Hence there are only finitely many occurrences of positions of the form $(p,x')$ in $\pi$, i.e., from a certain position $(\delta,x')$ on the play
		 follows $\exists$'s strategy $f_{\mu,x'}$ in $\G$ at $(\delta,x')$. In other words, 
		 such a play is won by $\exists$ as - modulo a finite prefix - it
		 corresponds by our construction to a $f_{U(x')}$-compliant $\G_{U(x')}$-play from position $(\delta,x')$ and
		 $f_{U(x')}$ is a winning strategy for $\exists$ at $(\delta,x')$.	\\	 
%		  there won't be any infinite play
%		 of the game $\game$ according to $\eloi$'s strategy that passes infinitely often
%		 through a position of the form $(p,x')$ because otherwise there would be a corresponding
%		 play of the fixpoint game that is played according to $\eloi$'s winning strategy but
%		 nevertheless won by $\forall$. Therefore any play of the game $\game$
%		 according to $\eloi$'s strategy will eventually be 
%		 a play of the game $\game_{U(x)}$ that is played according to $\eloi$'s winning strategy
%		 $f_x^{U(x)}$ in $\game_{U(x)}$ and thus all such plays will be won by $\eloi$. 
	 
		 \noindent We now turn to the proof of the implication from (\ref{win}) to (\ref{winone}). 
		 Consider a strategy $f$ for $\exists$ in
		 $\game$ such that $f$ is winning for all positions in $\Win_\exists(\game)$ and let $\Delta := \{ x \in X \mid (\delta,x) \in \Win_\exists(\game)\}$. To prove our claim it suffices to show that
		 $\Delta \subseteq \Win_\exists(\game_\mu^I(\operator{p}{V}{\delta}))$ by equipping $\exists$
		 with a suitable strategy in $\G_\mu^I(\operator{p}{V}{\delta})$ that is winning at all positions
		 in $\Delta$.
%		  at position $(\varphi,x_0)$. We are going to describe a winning strategy
%		 for $\exists$ in  $\game_\mu^I(\operator{p}{V}{\delta})$. 
		 As before, we let 
		 %$\Delta := \{ x \in X \mid (\delta,x) \in \Win_\exists(\game)\}$, 
		 $\unf_p=\{(p,x)\mid
		 x \in X\}$ and
		 for all $x \in \Delta$ we put \vspace{-.25cm}
		 \begin{eqnarray*} 
		 C(x) & := & \{ z_y \in X \mid \exists y \in X. 
		 (p,y) \mbox{ is reachable in an $\unf_p$-full $\G$-play $\pi$ from } (\delta,x) \mbox{ such that} \\
		 & & \qquad \qquad \mbox{$\pi$ is compliant with
		 $\exists$'s strategy } f, \\
		 	%& & \qquad \qquad \mbox{ no proper prefix of $\pi$ contains a position of the form } (p,y'), \\ 
		 	& & \qquad \qquad \mbox{ $\forall$ can move from $(p,y)$ to position } (p,U_y)  \\
			& & \qquad \qquad \mbox{ to which $\eloi$'s reply according to her strategy $f$ is %} \\
%			& & \qquad \qquad \mbox{ 
			to move to } (\delta,z_y) \mbox{ with } z_y \in U_y \}
					 \end{eqnarray*}  
		 Let $x \in \Delta$ and let $U \subseteq X$ be clopen with $C(x) \subseteq U$.
		 With our definition of $C(x)$, it can be easily seen that $\exists$ has a winning strategy
		 at $(\delta,x)$ in $\game_U$: Firstly, by Lemma~\ref{lem:GvsGU}, for each $U \subseteq X$ we know that
		 $\exists$ has a strategy $f_U$ in $\G_U$
		 at $(\delta,x)$ such that every $\unf_p$-full $\G$-play $\pi$ compliant with $f$ starting at $(\delta,x)$
		 corresponds to a full, $f_U$-compliant $\G_U$-play.
		 % for any clopen set $U \subseteq X$ with
		 %$C(x_0) \subseteq U$: 
		 
		 Suppose now for a contradiction that there is some $U' \in \Clp(\bbX)$ with $C(x) \subseteq U'$
		 for which $(\delta,x) \not\in \Win_\exists(\G_{U'})$. 
		 This implies that the strategy $f_{U'}$ cannot be winning for $\exists$ in $\G_{U'}$ at 
		 $(\delta,x)$ and thus there exists some state
		 $(p,y)$ with $y \not\in U'$ and with the property that $(p,y)$ 
		 is reachable from $(\delta,x)$ in an full $\G$-play $\pi$  
		 compliant
		 $\exists$'s strategy $f_{U'}$. By definition of $f_{U'}$, there
		 exists a $\unf_p$-full $\G$-play $\pi$ from $(\delta,x)$ to $(p,y)$ that is
		 compliant with $f$.
		 This leads to a contradiction: at position $(p,y)$ in $\game$ - as $y \in X \setminus U'$ by assumption -  $\forall$
		 could move to $(p,X \setminus U')$ and $\exists$ could choose an element $z_y \in X \setminus U'$ 
		 and move to $(\delta,z_y)$ according to her strategy $f$.
		 By definition of $C$, we get $z_y \in C(x) \subseteq U'$ and hence $z_y \in U'$ which is a contradiction.
		 
		 This finishes the proof of the fact that $\exists$ has a winning strategy
		 at $(\delta,x)$ in $\game_U$ for any clopen set $U \subseteq X$ with
		 $C(x) \subseteq U$.
		 Consequently, by~(\ref{equ:IH_adequ}), we have
		 $x \in \operator{p}{V}{\delta}(U)$ for all $U \in \Clp(\bbX)$ with $C(x) \subseteq U$.
		 This means that for each $x \in \Delta$, $\exists$ can move from position $x$ to position
		 $C(x)$ in $\game_\mu^I(\operator{x}{V}{\delta})$, i.e., $C$ encodes a
		 legitimate strategy for $\exists$ in all positions $x \in \Delta$.		 
		 We are now going to prove that for any play \vspace{-.2cm}
		 \[ x \; C(x) \; x_1 \; C(x_1) \; x_2 \; C(x_2) \dots x_n \; C(x_n) \vspace{-.2cm}
\]
		 of $\game_\mu^I(\operator{p}{V}{\delta})$ starting in $x$ and compliant with
		 $\exists$ strategy $C$ it is possible to construct a ``shadow'' play of $\game$ starting
		 at $(\varphi,x)$ that is compliant with $\exists$'s winning strategy in $\game$
		 and that is of the form \vspace{-.2cm}
		 \[ (\varphi,x) \dots (\delta,x_1) \dots (\delta,x_2) \dots (\delta,x_n). \vspace{-.2cm}
\] 
		 It suffices to see how a round $x_i \; C(x_i) \; x_{i+1}$ in $\game_\mu^I(\operator{p}{V}{\delta})$
		 is mirrored in $\game$. To this aim note that $x_{i+1} \in C(x_i)$. Hence
		 there exists some $U \in \Clp(\bbX)$ with $x_{i+1} \in U$  such that $(p,U)$ is reachable from $(\delta,x_{i})$ via a $\G$-play
		 $\pi$ compliant with $\exists$'s winning strategy that is 
		 continued by $\eloi$ by moving to position $(\delta,x_{i+1})$.
		 %with $z \in U$ in a
		 %play $\pi$ conform $\exists$'s winning strategy. 
		% and such that $x_{i+1} \in U \subseteq C(x_i)$.
		 Clearly the play $\pi$ followed by $\eloi$'s move to $(\delta,x_{i+1})$ constitutes
		 the required shadow play of $\game$. 
	\end{proof} \vspace{-.1cm}
	
\begin{example}\label{ex:c} {\em 
We will give an example of an extremally disconnected modal space $(\bbX,R)$ with
$\bbX=(X,\tau)$, a clopen valuation $V$ and a modal formula $\varphi(q,p)$ such that
the standard semantics of $\mu q. \varphi$ and the topological semantics of $\mu q.\varphi$ differ.
Let $\mathbb{Z}$ be the set of integers with the discrete topology.
Let $X = \beta(\mathbb{Z})$ be the Stone--\v{C}ech compactification of $\mathbb{Z}$. 
 Then $\beta(\mathbb{Z})$ is extremally disconnected, see eg~\cite{Sik60}.
We define a relation $R$ on $X$ by 
$z R y $ iff ($z,y\in \mathbb{Z}$ and $y = z+1$ or $y=z-1$ or $z\in X$ and $y\in \beta(\mathbb{Z})\setminus \mathbb{Z}$).
%It is easy to check that $(W,R)$ is a descriptive frame. Moreover, by Lemma~\ref{betalem}, $(W,R)$ is 
%a descriptive $\mu$-frame. 
Now we define a clopen valuation $V(p)= \{0\}$. 
Consider the formula $\varphi(q,p)= p\vee \Diamond\Diamond q$. 
The standard semantics of $\mu q.\varphi$ is equal to the set of all even and negative even numbers. 
The topological semantics, in contrast, is equal to the whole space $X$. }
%The closure of this set contains a proper subset of $\beta(\mathbb{Z})\setminus \mathbb{Z}$ and, as is easy to check, is not a fixed point
%of $\varphi(x,h(p))$. 
%Note that in this case we have $\br{\mu x(p\vee \Diamond\Diamond x)}_h^{\Clp(\bbX)}= \br{\mu x(p\vee \Diamond\Diamond x)}_h^{\Cl(W)}= W$.	
%}
\end{example}
\vspace{-.3cm}

%\subsection{Compact Hausdorff spaces: regular open semantics}
%	The game board of the evaluation game $\Ev_r(\bbX,\varphi)$ of some formula $\varphi$
%	on a model $\bbX$ that is based on a modal compact Hausdorff frame is defined very similarly to the evaluation
%	game in the previous section. Rather than writing down the full definition we confine ourselves to making precise where
%	the differences between the games are.

\section{Bisimulations}
We are now going to describe bisimulations for our topological setting.
The definition is essentially the standard one with an additional topological condition.

\begin{definition}
	Let $\bbM_1=(\bbX_1,R_1,V)$ and $\bbM_2=(\bbX_2,R_2,V)$ be extremally disconnected Kripke models based
	on the spaces $\bbX_1 = (X_1,\tau_1)$ and $\bbX_2 = (X_2,\tau_2)$.
	A relation $Z \subseteq X_1 \times X_2$ is called a \emph{clopen bisimulation} iff $Z \subseteq X_1 \times X_2$ is a (standard) Kripke
	bisimulation and for any clopen subsets $U_1 \in \Clp(\bbX_1)$ and $U_2 \in \Clp(\bbX_2)$ we have
	$Z[U_1] = \{ x' \in X_2 \mid \exists x \in U_1. (x,x') \in Z \} \in \Clp(\bbX_2)$ and
	$Z^{-1}[U_2] = \{ x \in X_1 \mid \exists x' \in U_2. (x,x') \in Z \} \in \Clp(\bbX_1)$. 
%	\marginpar{find different notations $Z[U]$ and $Z^-[U]$? Nick: $Z^{-1}$ ?}
\end{definition} 

The justification for the notion of clopen bisimulations is provided by the following proposition.

\begin{proposition}\label{prop:bisim}
	Let $Z$ be a clopen bisimulation between extremally disconnected Kripke models   
	$\bbM_1=(\bbX_1,R_1,V)$ and $\bbM_2=(\bbX_2,R_2,V)$.
%	that are based on extremally disconnected spaces. 
	Then for any formula $\varphi \in \Lang_\mu$ of the modal $\mu$-calculus and any 
	states $x \in  X_1$ and $x' \in X_2$  such that $(x,x') \in Z$, we have $x \in \lsem \varphi \rsem$ iff $x' \in \lsem \varphi \rsem$.
\end{proposition}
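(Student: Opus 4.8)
The plan is to prove bisimulation invariance by induction on the structure of the formula $\varphi \in \Lang_\mu$. The base cases for propositional variables $(p$ and $\neg p)$ follow from the fact that a clopen bisimulation is in particular a standard Kripke bisimulation and hence respects the valuation $V$; the Boolean cases ($\wedge$, $\vee$, $\bot$, $\top$) and the modal cases ($\Diamond$, $\Box$) are identical to the classical Hennessy--Milner style argument and use only the back-and-forth conditions of the underlying Kripke bisimulation. So the entire content of the proposition lies in the two fixpoint cases $\mu p.\psi$ and $\nu p.\psi$, and I would treat one of them (say $\mu$) in detail and remark that the other is dual.

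For the fixpoint case I would exploit the game semantics rather than arguing directly about the ordinal approximants $\operator{p}{V}{\psi}^\mu_\alpha$, since transporting fixpoint approximants across a bisimulation is exactly where the clopen condition becomes delicate. Concretely, to show $x \in \lsem \mu p.\psi \rsem_V$ iff $x' \in \lsem \mu p.\psi \rsem_V$ whenever $(x,x') \in Z$, I would invoke the Adequacy Theorem to replace membership in the denotation by the existence of a winning strategy for $\exists$ in the evaluation games $\Ev(\mu p.\psi, \bbM_1)$ and $\Ev(\mu p.\psi, \bbM_2)$. The goal then becomes: given a winning strategy for $\exists$ in the game over $\bbM_1$ from $(\varphi,x)$, build one over $\bbM_2$ from $(\varphi,x')$. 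I would maintain as an invariant that every pair of matched positions $(\chi,y)$ in $\bbM_1$ and $(\chi,y')$ in $\bbM_2$ satisfies $(y,y')\in Z$, and copy $\exists$'s moves across the bisimulation, using the back-and-forth clauses of $Z$ to match the $\Diamond/\Box$ rounds just as in the modal case.

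The new ingredient, and the step I expect to be the main obstacle, is handling the fixpoint-unfolding rounds of the evaluation game, where a player must move from $(p,y)$ to some clopen set $(p,U)$ and the opponent then selects a point in $U$. Here the clopen condition on $Z$ is essential: when $\exists$ (for a $\nu$-variable) or $\forall$ (for a $\mu$-variable) plays a clopen $U_1 \in \Clp(\bbX_1)$ on one side, I would have the imitating player respond with the \emph{image} $Z[U_1]$ on the other side, which the definition of clopen bisimulation guarantees to be again clopen, hence a legal move; the inverse direction uses $Z^{-1}[U_2]$ symmetrically. One must check that the point chosen by the opponent in $Z[U_1]$ can be matched back to a point of $U_1$ related by $Z$ (this is immediate from the definition of the relational image $Z[U_1]$), so that the $Z$-invariant on positions is preserved across the round. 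A further subtlety is that winning the infinite plays depends on $\max(\Inf(\pi))$ being a $\nu$-variable: since the shadow play visits the same bound variables in lockstep with the original play, the two plays induce the \emph{same} sequence of unfolded variables, so $\Inf(\pi)=\Inf(\pi')$ and the winning condition is preserved verbatim.

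Putting this together, the constructed strategy on $\bbM_2$ is winning because every $f$-imitating play on $\bbM_2$ from $(\varphi,x')$ is shadowed by an $\exists$-winning play on $\bbM_1$ from $(\varphi,x)$ with identical variable-occurrence behaviour, so finite plays end with $\forall$ stuck and infinite plays have the same $\nu$-maximal variable. By Adequacy this yields $x'\in\lsem\mu p.\psi\rsem_V$, and by the symmetry of the construction (swapping the roles of $\bbM_1,\bbM_2$ and using $Z^{-1}$, which is again a clopen bisimulation) the converse implication holds too. The $\nu$-case is entirely dual, interchanging the roles of $\exists$ and $\forall$ at the unfolding positions; I expect no genuinely new difficulty there beyond keeping the clopen-image bookkeeping straight.
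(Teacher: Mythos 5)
Your proposal is correct and is essentially the paper's own argument: the paper likewise reduces the statement to the evaluation games via the Adequacy Theorem and transfers $\exists$'s winning strategy across $Z$ by maintaining $Z$-matched (lockstep) plays, using the clopen sets $Z^{-1}[U_2]$ (resp.\ $Z[U_1]$) as the mirrored move at $\mu$- (resp.\ $\nu$-) unfolding positions --- legal exactly because of the clopen-bisimulation condition --- and noting that lockstep plays unfold the same variables, so the winning conditions agree. The only difference is your structural-induction wrapper, which is redundant: your game-theoretic treatment of the fixpoint case never invokes the induction hypothesis and already covers every position type, which is precisely how the paper organizes the proof, namely as a single strategy transfer for arbitrary $\varphi$.
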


\begin{proof}
      Suppose that $(x,x') \in Z$ and that $x \in \lsem \varphi \rsem$ for some formula $\varphi$.
      This implies by our adequacy theorem that $(\varphi,x) \in \Win_\exists(\Ev(\varphi,\bbM_1))$.
      We are now going to transform $\exists$'s winning strategy in $\game_1= \Ev(\varphi,\bbM_1)$ at position
      $(\varphi, x)$ into a winning strategy for $\exists$ in $\game_2= \Ev(\varphi,\bbM_2)$ at position $(\varphi,x')$.
      
      As a preparation we need to define when we consider positions of $\game_1$ and $\game_2$ to be equivalent:
      we say $(\psi_1,x_1) \in \Lang_\mu \times X_1$ and $(\psi_2,x_2) \in \Lang_\mu \times X_1$ are $Z$-equivalent if $\psi_1 = \psi_2$ and $(x_1,x_2) \in Z$.
      Furthermore we write  $(p,U_1) \leq_Z (q,U_2)$ for $(p,U_1) \in \Lang_\mu \times \Clp(\bbX_1)$ and $(p,U_2) \in \Lang_\mu \times \Clp(\bbX_2)$ if $p = q$ and if
      for all $x \in U_1$ there exists $x' \in U_2$ such that $(x,x') \in Z$. Similarly we define $(p,U_1) \geq_Z (q,U_2)$.
       Consider two (possibly partial) plays $\pi_1= b_1 \ldots b_k$ and $\pi_2=b_1'\dots b_l'$ of $\game_1$ and $\game_2$, respectively.
      We say $\pi_1$ and $\pi_2$ are $Z$-equivalent iff $k=l$ and for all $i=1,\ldots,k$ we have 
      \begin{itemize}
      	\item $b_i$ and $b_i'$ are of the form $b_i=(\psi,x)$ and $b_i'=(\psi,x')$ and both positions are $Z$-equivalent, or
	\item $b_i = (p,U_1)$, $b_i' = (p,U_2)$, $p$ is bound by $\mu$ and $(p,U_1) \leq_Z (p,U_2)$, or
	\item $b_i = (p,U_2)$, $b_i' = (p,U_2)$, $p$ is bound by $\nu$ and $(p,U_1) \geq_Z (p,U_2)$. 
	\end{itemize}
      
      Let $\pi_1$ be a play of $\game_1$ that starts in $\eloi$'s winning position $(\varphi,x)$ and that is played according to $\eloi$'s
      winning strategy. We are going to show that if $\pi_2$ is a $Z$-equivalent play of $\game_2$
      that starts at position $(\varphi,x')$, then either
      \begin{itemize}
      	\item  both plays $\pi_1$ and $\pi_2$ are full (and thus won by $\eloi$) or 
	\item it is $\eloi$'s turn and $\eloi$ has a strategy to extend $\pi_2$ to a play $\pi_2 b'$ that is $Z$-equivalent to an extension $\pi_1 b$ of $\pi_1$ such that $\pi_1 b$ is a $\game_1$-play compliant with $\exists$'s winning strategy, or
	\item it is $\abe$'s turn and for all of $\abe$'s moves that extend $\pi_2$ to $\pi_2 b'$ there is a move of $\abe$ in $\game_1$ such that
		 the resulting play $\pi_1 b$ of $\game_1$ is $Z$-equivalent to $\pi_2 b'$.
     \end{itemize} 
       Clearly this claim will imply that $\eloi$ has a winning strategy in $\game_2$ at position $(\varphi,x')$ as required.
%      \end{itemize}
      The claim is proven by a case distinction on the last state of $\pi_2$. 
      Due to space reasons we only discuss the cases of the modal diamond and the (least) fixpoint cases.
      
      	\noindent
	{\bf Case:} $\pi_2 = b_1'\dots b_n' (\Diamond \psi,x_2)$. By assumption there exists a $Z$-equivalent play $\pi_1 = b_1\dots b_n (\Diamond \psi,x_1)$ which in particular implies
           that $(x_1,x_2) \in Z$. Clearly it is $\eloi$'s turn and she can prolong the $\game_1$-play by moving according to her strategy to $(\psi,y)$ for some $y \in X_1$ with $(x_1,y) \in R_1$.
           As $Z$ is a bisimulation we know that there must be $y' \in X_2$ such that $(x_2,y') \in R_2$ and $(y.y') \in Z$. Hence $\abe$ can prolong the $\pi_2$-play by moving
           to $(\psi_1,y')$ and the resulting plays $\pi_1 = b_1\dots b_n (\Diamond \psi,x_2)(\psi,y)$ and $\pi_2 = b_1'\dots b_n' (\Diamond \psi,x_1) (\psi,y')$ are $Z$-equivalent.
      	  
	   \noindent
	  {\bf Case:} $\pi_2 = b_1' \dots b_n' (p,x_2)$ for some $p \in \BV(\varphi)$ that is bound by a $\mu$-operator.
	   In this case $\pi_1 = b_1 \dots b_n (p,x_1)$ and its $\abe$'s turn to continue both plays. Let $\abe$'s move in $\game_2$ be to $(p,U_2)$ for some
	   clopen subset $U \in \Clp(\bbX_2)$ with $x_2 \in U_2$ . Because $(x_1,x_2) \in Z$ and by the definition of a clopen bisimulation
	   we have that $U_1 \mathrel{:= }Z^{-1}[U_2]$ is a clopen neighbourhood of $x_1$. Therefore $\abe$ could extend the $\game_1$-play by moving
	   to $(p,U_1)$. The resulting plays  $\pi_1 = b_1 \dots b_n (p,x_1) (p,U_1)$ and $\pi_2 = b_1' \dots b_n' (p,x_2)(p,U_2)$ are clearly $Z$-equivalent
	   because all elements $U_1$ have their $Z$-correspondant in $U_2$ and hence we have $(p,U_1) \leq_P (p,U_2)$ as required. 
	   
	  \noindent
	  {\bf Case:} $\pi_2 = b_1' \dots b_n' (p,U_2)$ for some $p \in \BV(\varphi)$ that is bound by a $\mu$-operator.
	  	By assumption we have a $Z$-equivalent $\game_1$-play  $\pi_1 = b_1 \dots b_n (p,U_1)$ with the property
		that $(p,U_1) \leq_Z (p,U_2)$. By the definition of the game it is clear that in both plays $\eloi$ has to move.
		She continues $\pi_1$ by moving according to her winning strategy to some $(\psi,y)$ with $y \in U_1$. By definition of $\leq_Z$ there
		exists a $y' \in U_2$ such that $(y,y') \in Z$ and hence $\eloi$ can extend the play $\pi_2$ by moving to $(\psi,y')$. Again
		the resulting plays $\pi_1 =  b_1 \dots b_n (p,U_1)(\psi,y)$ and $\pi_2 =  b_1 \dots b_n (p,U_2)(\psi,y')$ are obviously $Z$-equivalent. 
 %     \end{description}
      The other cases of the induction can be dealt with in a similar fashion.
%      are discussed in the appendix.
%      \begin{description}
 This shows that from $x \in \lsem \varphi \rsem$ and $(x,x') \in Z$ we are able to deduce  $x' \in \lsem \varphi \rsem$.
      The implication in the opposite direction can be proven in a completely symmetrical way. As $\varphi$ was arbitrary we conclude that clopen bisimilarity implies
      equivalence with respect to the topological modal $\mu$-calculus. %\vspace{-.4cm}
\end{proof}

\begin{rema}
    {\em We leave it open whether the converse of Proposition~\ref{prop:bisim}  also holds, i.e.,  whether we have a Hennessy-Milner property wrt our notion of clopen bisimulation. %We plan to investigate this in the near future. %but conjecture that the Hennessy-Milner property fails.
    A closely related question is how our clopen bisimulations compare to the Vietoris bisimulations of~\cite{befove10:viet}. It is obvious that
    the topological closure of a clopen bisimulation is a Vietoris bisimulation and hence that clopen bisimilarity implies
    Vietoris bisimilarity. Proving the converse would yield the Hennessy-Milner property with regard to clopen bisimilarity
    as a corollary of~\cite[Cor.~3.10]{befove10:viet}.}
\end{rema}

%%\begin{rema}
%    At this stage it is not clear whether or not the converse of the Proposition~\ref{prop:bisim} 
%    %-   - 
%    also
%    holds true, ie.  whether or not a Hennessy-Milner property wrt to our notion of bisimulation holds. We plan to investigate this in the near future. %but conjecture that the Hennessy-Milner property fails.
%    A closely related question is how our clopen bisimulations compare to the Vietoris bisimulations from~\cite{befove10:viet}. It is obvious that
%    the topological closure of a clopen bisimulation is a Vietoris bisimulation and hence that clopen bisimilarity implies
%    Vietoris bisimilarity. Proving the converse would yield the Hennessy-Milner property with regard to clopen bisimilarity
%    as a corollary of~\cite[Cor.~3.10]{befove10:viet}.
%\end{rema}
\section{Conclusions and future work}
%We conclude:
%
%\begin{itemize}
%	\item first step towards  continuous game semantics of $\mu$-calculus (?)
%	\item phrase the whole story more abstractly (closure operators?)
%	\item decidability? (maybe do not allow all clopens/regular opens in the game
%	      in order to obtain decidable approximation of semantics?)
%	\item automata
%	\item completeness?
%\end{itemize}

In this paper we developed game semantics for topological fixpoint logic on extremally disconnected modal spaces.
These results can be seen as first steps towards the theory of topological fixpoint logic in general
and towards admissible game semantics of $\mu$-calculus in particular. 
As next steps we intend to extend
this framework to other classes of descriptive $\mu$-frames and to devise automata that operate on Kripke frames over topological
spaces. This will provide a deeper understanding of these structures as well as of axiomatic
systems of the modal $\mu$-calculus, since axiomatic systems of the $\mu$-calculus are complete wrt 
descriptive $\mu$-frames. Other important questions concern the finite model property, decidability and computational complexity and other key properties of topological fixpoint logics.

A further interesting research direction is to investigate modal fixpoint logic of Kripke frames based on compact
Hausdorff spaces and beyond. However, instead of clopen sets we will have to work with regular open
sets in this setting. This means we will enter the realm of modal compact Hausdorff spaces introduced in~\cite{BBH15}. These
are exactly the spaces that correspond to coalgebras for the Vietoris functor on the category of compact
Hausdorff spaces. Sahlqvist fixpoint correspondence for such spaces has been developed
already in~\cite{BS15}. I
This approach could pave
the way for an expressive and  decidable fixpoint logic for the verification of continuous systems or, more
generally, systems that combine discrete and continuous systems such as hybrid automata~\cite{DBLP:conf/lics/Henzinger96}.

Finally, we want to clarify the connection of our work to topological games \`{a} la 
Banach-Mazur~\cite{telga:topo87}. These games are similar to our fixpoint games as players move by choosing e.g.~open subsets - the fundamental differences are i) they characterise properties of the topology rather than properties of a relational structure over a topological space and ii) our parity winning condition that ensures determinacy. 

\vspace{-0.2cm}

\bibliographystyle{eptcs}
\bibliography{Mybib,automata}

%\newpage
%\appendix
%\input{appendix}

\end{document}